\newcommand{\footref}[1]{\textsuperscript{\textup{#1}}}
\newcommand{\eps}{\epsilon}
\newcommand{\bigket}[1]{\left |#1 \right \rangle}
\newcommand{\ketbra}[2]{|#1\rangle\! \langle #2|}
\newcommand{\nrm}[1]{\left\lVert #1 \right\rVert}
\newcommand{\bigO}[1]{\mathcal{O}\left( #1 \right)}
\newcommand{\bOt}[1]{\widetilde{\mathcal{O}}\left( #1 \right)}
\newcommand{\range}[1]{[#1]}
\newcommand{\img}[1]{\mathrm{img}\left( #1 \right)}
\newcommand{\specialcell}[2][c]{%
  \begin{tabular}[#1]{@{}c@{}}#2\end{tabular}}
\DeclareMathOperator{\poly}{poly}
\DeclarePairedDelimiter\bra{\langle}{\rvert}
\DeclarePairedDelimiter\ket{\lvert}{\rangle}
\DeclarePairedDelimiterX\braket[2]{\langle}{\rangle}{#1 \delimsize\vert #2}
\def\Tr{\mathrm{Tr}}
\def\>{\rangle}
\def\<{\langle}
\providecommand{\tr}[1]{\Tr\left[#1\right]}
\providecommand{\eend}[1]{\mathrm{End}\left(#1\right)}
\long\def\ignore#1{}
\newtheorem{theorem}{Theorem}
\newtheorem{corollary}[theorem]{Corollary}
\newtheorem{lemma}[theorem]{Lemma}
\newtheorem{definition}[theorem]{Definition}
\newcommand{\eq}[1]{(\ref{eq:#1})}
\newcommand{\thm}[1]{\hyperref[thm:#1]{Theorem~\ref*{thm:#1}}}
\newcommand{\cor}[1]{\hyperref[cor:#1]{Corollary~\ref*{cor:#1}}}
\newcommand{\defn}[1]{\hyperref[def:#1]{Definition~\ref*{def:#1}}}
\newcommand{\lem}[1]{\hyperref[lem:#1]{Lemma~\ref*{lem:#1}}}
\newcommand{\prop}[1]{\hyperref[prop:#1]{Proposition~\ref*{prop:#1}}}
\newcommand{\fig}[1]{\hyperref[fig:#1]{Figure~\ref*{fig:#1}}}
\newcommand{\tab}[1]{\hyperref[tab:#1]{Table~\ref*{tab:#1}}}
\newcommand{\alg}[1]{\hyperref[alg:#1]{Algorithm~\ref*{alg:#1}}}
\renewcommand{\sec}[1]{\hyperref[sec:#1]{Section~\ref*{sec:#1}}}
\newcommand{\subsec}[1]{\hyperref[subsec:#1]{Section~\ref*{subsec:#1}}}
\newcommand{\apx}[1]{\hyperref[apx:#1]{Appendix~\ref*{apx:#1}}}
\newcommand{\fac}[1]{\hyperref[fac:#1]{Fact~\ref*{fac:#1}}}
\newcommand{\A}{\ensuremath{\mathcal{A}}}
\newcommand{\N}{\ensuremath{\mathbb{N}}}
\newcommand{\R}{\ensuremath{\mathbb{R}}}
\newcommand{\Ex}{\mathbb{E}}
\newcommand{\C}{\mathbb{C}}
\renewcommand{\H}{\mathcal{H}}
\title{Distributional property testing in a quantum world}
	\author{
	András Gilyén\thanks{QuSoft, CWI and University of Amsterdam, the Netherlands. Supported by ERC Consolidator Grant QPROGRESS and partially supported by QuantERA project QuantAlgo 680-91-034. {\tt gilyen@cwi.nl}} 
	\and
	Tongyang Li\thanks{Department of Computer Science, Institute for Advanced Computer Studies, and Joint Center for Quantum Information and Computer Science, University of Maryland. Supported by IBM PhD Fellowship, QISE-NET Triplet Award (NSF DMR-1747426), and the U.S. Department of Energy, Office of Science, Office of Advanced Scientific Computing Research, Quantum Algorithms Teams program. {\tt tongyang@cs.umd.edu}}
}
\long\def\ignore#1{#1}
\newcommand{\cnewpage}{}
\newcommand{\cvspace}[1]{}
\newcommand{\cvskip}[1]{}
\begin{document}
\maketitle

\begin{abstract}%
A fundamental problem in statistics and learning theory is to test properties of distributions. We show that quantum computers can solve such problems with significant speed-ups. In particular, we give fast quantum algorithms for testing closeness between unknown distributions, testing independence between two distributions, and estimating the Shannon / von Neumann entropy of distributions. The distributions can be either classical or quantum, however our quantum algorithms require coherent quantum access to a process preparing the samples. Our results build on the recent technique of quantum singular value transformation, combined with more standard tricks such as divide-and-conquer. The presented approach is a natural fit for distributional property testing both in the classical and the quantum case, demonstrating the first speed-ups for testing properties of density operators that can be accessed coherently rather than only via sampling; for classical distributions our algorithms significantly improve the precision dependence of some earlier results.
\end{abstract}


\section{Introduction}\label{sec:intro}
Distributional property testing is a fundamental problem in theoretical computer science (see, e.g. \cite{goldreich2017introduction}). In such property testing questions the goal is to determine properties of probability distributions with the least number of independent samples. This has intimate connections and applications to statistics, learning theory, and algorithm design.

The merit of distributional property testing mainly comes from the fact that the testing of many properties admits \emph{sublinear} algorithms. For instance, given the ability to take samples from a discrete distribution $p$ on $\range{n}:=\{1,\ldots,n\}$, it requires $\Theta(n/\epsilon^{2})$ samples to ``learn'' $p$, i.e., to construct a distribution $q$ on $\range{n}$ such that $\|p-q\|_{1}\leq\epsilon$ with success probability at least $2/3$ ($\|\cdot\|_{1}$ being $\ell^{1}$-distance). However, testing whether $p=q$ or $\|p-q\|_{1}>\epsilon$ requires only $\Theta(\max\{\frac{n^{2/3}}{\epsilon^{4/3}},\frac{n^{1/2}}{\epsilon^{2}}\})$ samples from $p$ and $q$ (\cite{chan2014OptimalAlgTestCloseDistr}), which is sublinear in $n$ and significantly smaller than the complexity of learning the entire distributions. See \sec{related-works} for more examples and discussions.

In this paper, we study the impact of quantum computation on distributional property testing problems. We are motivated by the emerging topic of ``quantum property testing'' (see the survey of \cite{montanaro2013SurveQuantPropTest}) which focuses on investigating the quantum advantage in testing classical statistical properties. Quantum speed-ups have already been established for a few specific problems such as testing closeness between distributions (\cite{bravyi2011QAlgTestingPropOfDistr,montanaro2015QMonteCarlo}), testing identity to known distributions (\cite{chakraborty2010NewResQuantPropTest}), estimating entropies (\cite{li2017QQueryEntropyComp}), etc. In this paper we propose a generic approach for quantum distributional property testing, and illustrate its power on a few examples. This is our attempt to make progress on the question:
\begin{quote}
	\hspace{-5mm}\emph{Can quantum computers test properties of distributions systematically and more efficiently?}\kern-5mm
\end{quote}

\subsection{Problem statements}
Throughout the paper, we denote probability distributions on $\range{n}$ by $p$ and $q$; their $\ell^{\alpha}$-distance is defined as $\nrm{p-q}_{\alpha}:=(\sum_{i=1}^{n}|p_{i}-q_{i}|^{\alpha})^{\frac{1}{\alpha}}$. Similarly, we denote $n\times n$ density operators\footnote{For readers less familiar with quantum computing, a density operator (=quantum distribution) $\rho\in\C^{n\times n}$ is a positive semidefinite matrix with $\Tr[\rho]=1$. Please refer to the textbook \cite{nielsen2002QCQI} for more information.} ($=$quantum distributions) by $\rho$ and $\sigma$; their $\ell^{\alpha}$-distance is defined via the corresponding Schatten norm. 

\begin{paragraph}{Input models.}
To formulate the problems we address, we define classical and quantum access models for distributions on $[n]$. We begin with the very natural model of sampling.
\begin{definition}[Sampling]
A classical distribution $(p_{i})_{i=1}^{n}$ is accessible via \emph{classical sampling} if we can request samples from the distribution, i.e., get a random $i\in [n]$ with probability $p_{i}$. A quantum distribution $\rho\in\C^{n\times n}$ is accessible via \emph{quantum sampling} if we can request copies of the state $\rho$.
\end{definition}

Now we define a coherent analogue of the above sampling model. To our knowledge this type of query-access was not studied before in detail, especially in the context of density operator testing. 
The motivation for this input model is the following: we can think about a density operator as the outcome of some physical process. If we are able to simulate the corresponding process on a fault-tolerant quantum computer, then it provides purified access to the density operator. In the special case when we study a classical probability distribution coming from some classical randomized process, we can simply simulate the classical randomized process on a quantum computer.

\begin{definition}[Purified quantum query-access]\label{def:purified-quantum-query}
A density operator $\rho\in\C^{n\times n}$, has \emph{purified quantum query-access} if we have access to a unitary oracle $U_{\rho}$ (and its inverse) acting as\footnote{$|\psi\>\in\C^{n}$ denotes a ``ket'' vector and $\<\psi|=(|\psi\>)^{\dagger}$ stands for its conjugate transpose, called ``bra'' in Dirac notation; $|i\>\!=\!\vec{e}_{i}$ is the $i^{\text{th}}$ basis vector. An $\ell^2$-normalized $|\psi\>$ is called a pure state, and corresponds to density operator $\ketbra{\psi}{\psi}$. For $A=\C^k,B=\C^n$ and $\ket{\phi}\in A \otimes B$ we denote by $\tr{\ketbra{\phi}{\phi}}_A\in B\otimes B^* = \C^{n\times n}$ the partial trace over $A$.}
\cvskip{-5.5mm}
\begin{equation*}
U_{\rho}\ket{0}_A\ket{0}_B=\ket{\psi_\rho}_{AB}=\sum_{i=1}^{n}\sqrt{p_i}\ket{\phi_i}_A\ket{\psi_i}_B,
\text{ (where } \braket{\phi_i}{\phi_j}=\braket{\psi_i}{\psi_j}= \text{(Kronecker) } \delta_{ij})
\end{equation*}
\cvskip{-1.5mm} \noindent
such that $\Tr_A\left(\ketbra{\psi_\rho}{\psi_\rho}\right)=\rho$. 
If $\ket{\psi_i}=\ket{i}$, then $\rho=\sum_{i}^{n}p_i\ketbra{i}{i}$ is a diagonal density operator which can be identified with the classical distribution $p$, so we can simply write $U_{p}$ instead of $U_{\rho}$.
With a slight abuse of notation sometimes we will concisely write $\ket{\rho}$ instead of~$\ket{\psi_\rho}$.
\end{definition}

We also define an even stronger input model that is considered in a series of earlier works, see, e.g.,~(\cite{bravyi2011QAlgTestingPropOfDistr,chakraborty2010NewResQuantPropTest,li2017QQueryEntropyComp,bun2018PolyMethodStrikesBack}).

\cvspace{-0.7mm}
\begin{definition}[Classical distribution with discrete query-access]\label{def:discrete-query}
	$\!$A classical distribution $(p_{i})_{i=1}^{n}$, has \emph{discrete query-access} if we have classical / quantum query-access to a function $f\colon S \to [n]$ such that for all $i\in\range{n}$, $p_{i}=|\{s\in\range{S}: f(s)=i\}|/S$.
		(Typically the interesting regime is when $|S|\gg n$.)
	In the quantum case a query oracle is a unitary operator $\mathrm{O}$ acting on $\mathbb{C}^{|S|}\otimes \mathbb{C}^n$ as
	\cvskip{-3mm}
	\begin{equation*}
	\mathrm{O} \colon \ket{s,0}\leftrightarrow\ket{s,f(s)} \text{ for all }s\in S.
	\end{equation*}
\end{definition}
\cvskip{-1mm}
Note that if one first creates a uniform superposition over $S$ and then makes a query, then the above oracle turns into a purified query oracle to a classical distribution as in \defn{purified-quantum-query}. Therefore all lower bounds that are proven in this model also apply to the purified query-access oracles. In fact all algorithms that the authors are aware of do this conversion, so they effectively work in the purified query-access model. Moreover, we conjecture that the two input models are equivalent when $|S|\gg n$. For this reason we only work with the purified query-access model in this work.

\cnewpage

Another strengthening of the purified query-access model for classical distributions when we have access to a unitary (and its inverse) acting as $\ket{0}\mapsto\sum_{i=1}^{n}\sqrt{p_i}\ket{i}$.
\cvspace{-2mm}
\begin{definition}[Classical distribution with pure-state preparation access]\label{def:pureSTateQuery}
	$\!$A classical distribution $(p_{i})_{i=1}^{n}$, is accessible via \emph{pure state preparation oracle} if we have access to a unitary oracle $U_{\text{pure}}$ (and its inverse) acting as
	\cvskip{-8mm}
	$$
	U_{\text{pure}}\colon \ket{0}\mapsto\sum_{i=1}^{n}\sqrt{p_i}\ket{i}.
	$$
\end{definition}
\cvspace{-2mm}
	
This is again strictly stronger than the purified query-access model. In order to simulate purified queries we can first do a pure state query and then copy $\ket{i}$ to a second fresh ancillary register using, e.g., some CNOT gates. Finally, for completeness we mention that one could also consider a model similar to the above where one can only request samples of pure states of the form $\sum_{i=1}^{n}\sqrt{p_i}\ket{i}$, as studied for example in~\cite{arunachalam2017OptQSampCoplLearn,arunachalam2017TwoNewResQuantExactLearn}.

We will mostly focus on the first two input models and will only use the latter strengthenings of the purified query-access model for invoking and proving lower bounds.
\end{paragraph}

\begin{paragraph}{Property testing problems.}
We study three distributional properties: $\ell^{\alpha}$-closeness testing, independence testing, and entropy estimation. These properties are highly-representative; classically, these testers motivate general algorithms for testing properties of discrete distributions (\cite{diakonikolas2016NewTestPropDistr,acharya2017UnifiedEstimating}).

For brevity we only give the definitions for classical distributions; similar definitions apply to quantum density matrices if we replace vector norms by the corresponding Schatten norms.
\cvspace{-1mm}
\begin{definition}[$\ell^{\alpha}$-closeness testing]
Given $\epsilon>0$ and two probability distributions $p$, $q$ on $\range{n}$, \emph{$\ell^{\alpha}$-closeness testing} is to decide whether $p\!=\!q$ or $\nrm{p\!-\!q}_{\alpha}\geq\eps$ with success probability at least $\frac{2}{3}$.
Robust testing: decide whether $\!\nrm{p\!-\!q}_{\alpha}\!\leq\!0.99\eps$ or $\nrm{p\!-\!q}_{\alpha}\geq\eps$ with success probability at least $\frac{2}{3}$.
\end{definition}
\cvspace{-3mm}
\begin{definition}[Independence testing]
Given $\epsilon>0$ and a probability distribution $p$ on $\range{n}\times\range{m}$ with $n\geq m$, \emph{independence testing} is to decide, with success probability at least $\frac{2}{3}$, whether $p$ is a product distribution or $p$ is $\epsilon$-far in $\ell^{1}$-norm from any product distribution on $\range{n}\times\range{m}$ .
\end{definition}
\cvspace{-4mm}
\begin{definition}[Entropy estimation]
Given $\epsilon>0$ and a density operator $\rho\in\C^{n\times n}$, \emph{entropy estimation} is to estimate the Shannon / von Neumann entropy $H(\rho)=-\tr{\rho\log(\rho)}$ within additive $\epsilon$-precision with success probability at least $\frac{2}{3}$.
\end{definition}
\cvspace{-5mm}
\end{paragraph}

\subsection{Contributions}
We give a systematic study of distributional property testing for classical / quantum distributions, and obtain the following results for the purified quantum query model of \defn{purified-quantum-query}:
\cvspace{-2mm}
\begin{itemize}[leftmargin=4mm,itemsep=0mm]
	\item Entropy estimation of classical / quantum distributions costs  $\bOt{\frac{\sqrt{n}}{\epsilon^{1.5}}}$ and $\bOt{\frac{n}{\epsilon^{1.5}}}$
	queries respectively, as we prove in \thm{classical-entropy-purified} and \thm{quantum-entropy-purified}.	
	\item Robust $\ell^{2}$-closeness testing of classical / quantum distributions costs $\tilde{\Theta}\left(\frac{1}{\epsilon}\right)$ and $\mathcal{O}\kern-0.3mm\left(\kern-0.4mm\min\!\big(\frac{\sqrt{n}}{\epsilon}\!,\frac{1}{\epsilon^2}\big)\!\kern-0.1mm\right)\!$ queries respectively, as we prove in \thm{classical-l2-purified} and \thm{quantum-l2-purified}.					
	\item $\ell^{1}$-closeness testing of classical / quantum distributions costs $\bOt{\frac{\sqrt{n}}{\epsilon}}$ and $\bigO{\frac{n}{\epsilon}}$ queries respectively, as we prove in \cor{l1-purified}.
	\item Independence testing of classical / quantum distributions costs $\bOt{\frac{\sqrt{nm}}{\epsilon}}$ and $\bigO{\frac{nm}{\epsilon}}$ queries respectively, as we prove in \cor{independence-proof}.
\end{itemize}

\noindent
For context, we compare our results with previous classical and quantum results in \tab{results} and \tab{entropy}. (Note that all of our results are gate efficient, because they are based on singular value transformation and amplitude estimation, both of which have gate-efficient implementations.)

\begin{table}[H]
\Huge
\centering
\resizebox{1\columnwidth}{!}{%
\def\arraystretch{1.2}
\begin{tabular}{|c||c|c|c|}
\hline
\backslashbox{model}{problem} & $\ell^{1}$-closeness testing & (robust) $\ell^{2}$-closeness testing & Shannon / von Neumann entropy \\ \hline\hline
Classical sampling & \specialcell{$\Theta\left(\max\Big\{\frac{n^{2/3}}{\epsilon^{4/3}},\frac{n^{1/2}}{\epsilon^{2}}\Big\}\right)$ \\\cite{chan2014OptimalAlgTestCloseDistr}} & $\Theta\left(\frac{1}{\epsilon^{2}}\right)$ \cite{chan2014OptimalAlgTestCloseDistr}&
\specialcell{$\Theta\left(\frac{n}{\epsilon\log n}+\frac{\log^2 n}{\epsilon^{2}}\right)^{\phantom{\bar{a}}}\!\!$ \cite{jiao2015MinimaxEstFunDiscDistr}, \\\cite{wu2016MinimaxRatesEntrEstBestPolyApx}}\\ \hline
\specialcell{Classical with \\ quantum query-access} & $\boldsymbol{\bOt{\frac{\sqrt{n}}{\epsilon}}}$ & $\boldsymbol{\tilde{\Theta}\left(\frac{1}{\epsilon}\right)}$ &  $\boldsymbol{\bOt{\frac{\sqrt{n}}{\epsilon^{1.5}}}}$; $\tilde{\Omega}(\sqrt{n})$ \cite{bun2018PolyMethodStrikesBack} \\ \hline
\specialcell{Quantum state \\ with purification} & $\boldsymbol{\bigO{\frac{n}{\epsilon}}}$ & $\boldsymbol{\bigO{\min\left(\frac{\sqrt{n}}{\epsilon},\frac{1}{\eps^2}\right)}}$ & $\boldsymbol{\bOt{\frac{n}{\epsilon^{1.5}}}}$ \\ \hline
\specialcell{Quantum state\\ sampling} & $\Theta\left(\frac{n}{\epsilon^{2}}\right)$ \cite{badescu2017QStateCertification} & $\Theta\left(\frac{1}{\epsilon^{2}}\right)$ \cite{badescu2017QStateCertification} & $\bigO{\frac{n^{2}}{\epsilon^{2}}}$, $\Omega\Big(\frac{n^{2}}{\epsilon}\Big)$ \cite{acharya2017MeasuringQuantEntropy} \\ \hline
\end{tabular}
}
\stepcounter{footnote}
\caption{Summary of sample and query complexity results. Our new bounds are printed in \textbf{bold}. For classical distributions with quantum query-access\footref{\thefootnote} we prove (almost) matching upper and lower bounds for $\ell^2$-testing, and improve the previous best complexity $\bOt{\sqrt{n}/\epsilon^{2.5}}$ for $\ell^1$-testing by \cite{montanaro2015QMonteCarlo} and $\bOt{\sqrt{n}/\epsilon^{2}}$ for Shannon entropy estimation by \cite{li2017QQueryEntropyComp}. We~are not aware of prior work on testing quantum distributions with purified query-access.
}
\label{tab:results}
\end{table}
\footnotetext{Recent results of \cite{chailloux2018QueryPermutation} imply that in this model quantum speed-ups are at most cubic.}

\begin{table}[H]
\Huge
\centering
\cvskip{-3mm}
\resizebox{0.8\columnwidth}{!}{%
\def\arraystretch{2}
\begin{tabular}{|c||c|c|}
\hline
 & Sample complexity & (Purified) Query complexity \\ \hline\hline
Classical & $\Theta\left(\frac{n}{\log n}\right)$ \cite{valiant2011EstimatingTheUnseen} &  $\widetilde{\Theta}\left(\sqrt{n}\right)$ \cite{li2017QQueryEntropyComp,bun2018PolyMethodStrikesBack} \\ \hline
Quantum &  $\Theta\left(n^2\right)$ \cite{acharya2017MeasuringQuantEntropy} & $\boldsymbol{\bOt{n}}$ \\ \hline
\end{tabular}
}
\caption{Complexities of Shannon / von Neumann entropy estimation with constant precision. It seems that the $n$-dependence is roughly quadratically higher for quantum distributions, while coherent quantum access gives a quadratic advantage for both classical and quantum distributions. This suggests that our entropy estimation algorithm has essentially optimal $n$-dependence for density operators with purified access, however we do not have a matching lower bound yet.}
\label{tab:entropy}
\end{table}

\subsection{Techniques}\label{subsec:techniques}
The motivating idea behind our approach is that if we can prepare a purification of a quantum distribution / density operator $\rho$, then we can construct a unitary $U$, which has this density operator in the top-left corner, using only two queries to $U_\rho$. This observation is originally due to~\cite{low2016HamSimQubitization}. We call such a unitary a \emph{block-encoding} of $\rho$:
\cvskip{-3mm}
$$ U= \left[\begin{array}{cc} \rho & . \\ . & .    \end{array}\right] \Longleftrightarrow \rho= \left(\bra{0}^{\otimes a}\otimes I\right) U \left(\ket{0}^{\otimes a}\otimes I\right).$$
\cvskip{-1mm}\noindent
One can think of a block-encoding as a probabilistic implementation of the linear map $\rho$: given an input state $\ket{\psi}$, applying the unitary $U$ to the state $\ket{0}^{\otimes a}\ket{\psi}$, measuring the first $a$-qubit register and post-selecting on the $\ket{0}^{\otimes a}$ outcome, we get a state $\propto \rho\ket{\psi}$ in the second register. Block-encodings are easy to work with, for example given a block-encoding of $\rho$ and $\sigma$ we can easily construct a block-encoding of $(\rho-\sigma)/2$, see for example in the work of \cite{chakraborty2018BlockMatrixPowers}.

\paragraph{Example application to $\ell^3$-testing.} The problem is to decide whether $\rho=\sigma$ or $\nrm{\rho-\sigma}_3\geq \eps$, with query complexity $\bigO{\eps^{-\frac{3}{2}}}$. The first idea is that if we can prepare a purification of $\rho$ and $\sigma$, then by flipping a fair coin and preparing $\rho$ or $\sigma$ based on the outcome, we can also prepare a purification of $(\rho+\sigma)/2$. The second idea is to combine the block-encodings of $\rho$ and $\sigma$ to apply the map $\frac{\rho-\sigma}{2}$ to the purification of $(\rho+\sigma)/2$, to get
$$ \bigket{\frac{\rho+\sigma}{2}}\mapsto \left(\frac{\rho-\sigma}{2}\otimes I\right) \bigket{\frac{\rho+\sigma}{2}}\ket{0}+\ldots \ket{1}.$$
Finally, apply amplitude estimation with setting $M=\Theta(\eps^{-\frac{3}{2}})$. This works 
since if $\nrm{\rho-\sigma}_3\geq \eps$, then the $\ket{0}$ ancilla state has 
probability $\tr{(\rho-\sigma)^2(\rho+\sigma)}/8\geq \tr{|\rho-\sigma|^3}/8\geq \eps^3/8$.

\paragraph{Working with singular values.}
The above is a promising approach because it directly makes the density operator in question operationally accessible. However, it turns out that using this simple block-encodings is often suboptimal for distribution testing, because a query in some sense gives access to the square-root of $\rho$, whereas this unitary has $\rho$ itself in the top-left corner. Since the problems often heavily depend on smaller eigenvalues of $\rho$, the square root of $\rho$ is more desirable since it has quadratically larger singular-/eigenvalues.

Therefore, we show how to efficiently construct a unitary matrix whose top-left corner contains a matrix with singular values $\sqrt{p_{1}},\ldots,\sqrt{p_{n}}$, given purified access to a classical distribution $p$. To be more precise, we define \emph{projected unitary encodings}, which represents a matrix $A$ in the form of $\Pi U \widetilde{\Pi}$, where $\Pi, \widetilde{\Pi}$ are orthogonal projectors and $U$ is a unitary matrix. One can think about $U$ in a projected unitary encoding as a probabilistic implementation of the map $A\colon \img{\widetilde{\Pi}}\to \img{\Pi}$. Take for example $U:=\left(U_p\otimes I\right)$, $\Pi:=\left(\sum_{i=1}^{n}I\otimes \ketbra{i}{i}\otimes\ketbra{i}{i}\right)$, and $\widetilde{\Pi}:=\left(\ketbra{0}{0}\otimes\ketbra{0}{0}\otimes I\right)$. As we show in~\apx{ProjectedEncodings} these operators form a projected unitary encoding of
\begin{equation}\label{eq:probDistSingencoding}
	A=\Pi U \widetilde{\Pi}=\sum_{i=1}^{n}\sqrt{p_i}\ketbra{\phi_i}{0}\otimes\ketbra{i}{0}\otimes\ketbra{i}{i}.
\end{equation}

We can use a similar trick for a general density operator $\rho$ too. However, there is a major difficulty which arises from the fact that we do not a prioiri know the diagonalizig basis of $\rho$. Therefore we use slightly different operators. 
Let $W$ be a unitary,\footnote{This unitary is easy to implement, e.g., by using a few Hadamard and CNOT gates.} mapping $\ket{0}\ket{0}\mapsto\sum_{j=1}^{n}\frac{\ket{j}\ket{j}}{\sqrt{n}}$. Let $U':=\!\left(I\otimes U^\dagger_\rho\right)\!\left(W^\dagger \otimes I\right)$, $\Pi':=\left(I\otimes\ketbra{0}{0}\otimes\ketbra{0}{0}\right)$ and $\widetilde{\Pi}$ as above.
As we show in~\apx{ProjectedEncodings} these operators form a projected unitary encoding of
\begin{equation}\label{eq:densityOpSingencoding}
A'=\Pi U'\widetilde{\Pi}
=\sum_{i=1}^{n}\sqrt{\frac{p_i}{n}}\ketbra{\phi'_i}{0}\otimes\ketbra{0}{0}\otimes\ketbra{0}{\psi_i}.
\end{equation}

As we can see, the case of general density operators is less efficient, it only gives operational access to the ``square root'' of $\rho/n$. If the $1/\sqrt{n}$ factor could be directly improved, that would speed up our von Neumann entropy estimation algorithm \thm{quantum-entropy-purified}, which seems unlikely, cf.~\tab{entropy}.

\paragraph{General recipe.}
Our recipe to distributional property testing can be summarized as follows.
\begin{enumerate}[leftmargin=*,itemsep=0.85mm,label={\arabic*.)}]
	\item Construct a unitary matrix / quantum circuit operationally representing the distribution.\kern-10mm
	\item Transform the singular values of the corresponding matrix according to a desired function.
	\item Apply the resulting map to the purification of the distribution, or another suitable state.
	\item Estimate the amplitude of the flagged output state and conclude.
\end{enumerate}
The above general scheme describes our approach to the problems we discuss in this paper. Sometimes it is useful to divide the probabilities / singular values into bins, and fine-tune the algorithm by using the approximate knowledge of the size of the singular values. This divide-and-conquer strategy is at the core of our improved robust $\ell^2$-closeness tester of \thm{classical-l2-purified}.

\subsection{Related works on distributional property testing}\label{sec:related-works}
\begin{paragraph}{Classical algorithms.}
Many distributional property testing problems fall into the category of \emph{closeness testing}, where we are given the ability to take independent samples from two unknown distributions $p$ and $q$ with cardinality $n$, and the goal is to determine whether they are the same versus significantly different. For \emph{$\ell^{1}$-closeness testing}, which is about testing whether $p=q$ or $\|p-q\|_{1}\geq\epsilon$, \cite{batu2013TestClosenessDiscDistr} first gave a sublinear algorithm using $\tilde{O}(n^{2/3}/\epsilon^{8/3})$ samples to $p$ and $q$. The follow-up work by \cite{chan2014OptimalAlgTestCloseDistr} determined the optimal sample complexity as $\Theta(\max\{\frac{n^{2/3}}{\epsilon^{4/3}},\frac{n^{1/2}}{\epsilon^{2}}\})$; the same paper also gave a tight bound $\Theta(\frac{1}{\epsilon^{2}})$ for \emph{$\ell^{2}$-closeness testing}.

Besides closeness testing, a similar problem is \emph{identity} testing where one of the distributions, say $q$, is known and we are given independent samples from the other distribution $p$. For \emph{$\ell^{1}$ identity testing}, it is known that the sample complexity can be smaller than that of $\ell^{1}$-closeness testing, which was proved by \cite{batu2001TestRndVarIndepIdentity} to be $\tilde{O}(\sqrt{n}/\epsilon^{4})$ and then \cite{paninski2008CoincidenceBasedTestUnif} gave the tight bound $\Theta(\sqrt{n}/\epsilon^{2})$. More recently, \cite{diakonikolas2016NewTestPropDistr} proposed a modular reduction-based approach for distributional property testing problems, which recovered all closeness and identity testing results above. Furthermore, they also studied \emph{independence testing}, i.e., whether a distribution on $\range{n}\times\range{m}$ ($n\geq m$) is a product distribution or at least $\epsilon$-far in $\ell^{1}$-distance from any product distribution, and determined the optimal bound $\Theta(\max\{\frac{n^{2/3}m^{1/3}}{\epsilon^{4/3}},\frac{(nm)^{1/2}}{\epsilon^{2}}\})$.

Apart from the relationship between distributions, properties of a single distribution also have been extensively studied. One of the most important properties is \emph{Shannon entropy} (\cite{shannon1948MatheTheoryOfComm}) because it measures for example compressibility. The sample complexity of estimating $H(p)$ within additive error $\eps$ has been intensively studied (\cite{batu2005ComplOfApxEntropy,paninski2003EstiEntrMutInf,paninski2004EstEntrSubLinSamp}); in particular, \cite{valiant2011EstimatingTheUnseen,valiant2011PowerOfLinearEstimators} gave an explicit algorithm for entropy estimation using $\Theta(\frac{n}{\epsilon\log n})$ samples when $\epsilon=\Omega(n^{-0.03})$ and $\epsilon=O(1)$; for the general case \cite{jiao2015MinimaxEstFunDiscDistr} and \cite{wu2016MinimaxRatesEntrEstBestPolyApx} gave the optimal estimator with $\Theta\left(\frac{n}{\epsilon\log n}+\frac{(\log n)^{2}}{\epsilon^{2}}\right)$ samples.
\end{paragraph}

\begin{paragraph}{Quantum algorithms.}
The first paper on distributional property testing by quantum algorithms was by \cite{bravyi2011QAlgTestingPropOfDistr}, which considered classical distributions with discrete quantum query-access (see \defn{discrete-query}); it gives a quantum query complexity upper bound $O(\sqrt{n}/\epsilon^{6})$ for $\ell^{1}$-closeness testing and $O(n^{1/3}/\epsilon^{4/3})$ for identity testing to the uniform distribution on $\range{n}$. Subsequently, \cite{chakraborty2010NewResQuantPropTest} gave an algorithm for identity testing (to an arbitrary known distribution) with $\bOt{n^{1/3}/\epsilon^{5}}$ queries, and \cite{montanaro2015QMonteCarlo} improved the $\epsilon$-dependence of $\ell^{1}$-closeness testing to $\bOt{\sqrt{n}/\epsilon^{2.5}}$. More recently, \cite{li2017QQueryEntropyComp} studied entropy estimation under this model, and gave a quantum algorithm for Shannon entropy estimation with $\bOt{\sqrt{n}/\epsilon^{2}}$ queries and also sublinear quantum algorithms for estimating R{\'e}nyi entropies (\cite{renyi1961Entropy}).

Another type of quantum property testing results (\cite{odonnell2015QuantumSpectrumTesting,odonnell2016EfficientQuantumTomography,odonnell2017EfficientQuantumTomographyII,badescu2017QStateCertification,acharya2017MeasuringQuantEntropy}) concern \emph{density matrices}, where the $\ell^{1}$-distance becomes the trace distance and the Shannon entropy becomes the von Neumann entropy. To be more specific, for $n$-dimensional density matrices, the number of samples needed for $\ell^{1}$ and $\ell^{2}$-closeness testing are $\Theta(n/\epsilon^{2})$ and $\Theta(1/\epsilon^{2})$ (\cite{badescu2017QStateCertification}), respectively. In addition \cite{acharya2017MeasuringQuantEntropy} gave upper and lower bounds $\bigO{n^{2}/\epsilon^{2}}, \Omega\left(n^{2}/\epsilon\right)$ for estimating the von Neumann entropy of an $n$-dimensional density matrix with accuracy $\epsilon$.
\end{paragraph}

\subsection{Organization of the paper}
The rest of the paper is organized as follows. In \sec{prelim} we introduce two important quantum algorithmic techniques, amplitude estimation and singular value transformation. We give entropy estimators of classical and quantum distributions in \sec{entropy}. In \sec{testl2equality} we give an (essentially) optimal quantum algorithm for robustly testing $\ell^{2}$-closeness of classical distributions, and another efficient robust $\ell^2$-closeness tester for quantum distributions. Proof details of projected encodings, polynomial approximations for singular value transformation, and corollaries about $\ell^1$-closeness and independence testing are deferred to \apx{ProjectedEncodings}, \ref{apx:SVT-appendix}, and \ref{apx:proof-appendix} respectively.


\section{Preliminaries}\label{sec:prelim}

\subsection{Amplitude estimation}
Classically, given i.i.d.~samples of a Bernoulli random variable $X$ with $\Ex[X]=p$, it takes $\Theta(1/\epsilon^{2})$ samples to estimate $p$ within $\epsilon$ with high success probability. Quantumly, if we are given a unitary $U$ such that
\begin{align}\label{eq:AmpEst-def}
U|0\>|0\>=\sqrt{p}|0\>|\phi\>+|0^{\perp}\>,\quad\text{ where } \nrm{\ket{\phi}}=1 \text{ and }(\<0|\otimes I)|0^{\perp}\>=0,
\end{align}
then if measure the output state, we get $0$ in the first register with probability $p$.
Given access to $U$ we can estimate the value of $p$ quadratically more efficiently than what is possible by sampling:
\begin{theorem}{\!\bf\cite[Theorem 12]{brassard2002AmpAndEst}}\label{thm:AmpEst}
Given $U$ satisfying \eq{AmpEst-def}, the amplitude estimation algorithm outputs $\tilde{p}$ such that $\tilde{p}\in[0,1]$ and
\begin{align}\label{eq:AmpEst-1}
|\tilde{p}-p|\leq \frac{2\pi\sqrt{p(1-p)}}{M}+\frac{\pi^{2}}{M^{2}}
\end{align}
with success probability at least $8/\pi^{2}$, using $M$ calls to $U$ and $U^{\dagger}$.
\end{theorem}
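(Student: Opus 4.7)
The plan is to reduce amplitude estimation to phase estimation of a Grover-type operator, following the Brassard–Høyer–Mosca–Tapp approach. First I would define the reflections $S_\chi := I - 2(\ketbra{0}{0}\otimes I)$ (marking states with $\ket{0}$ in the first register as ``good'') and $S_0 := I - 2\ketbra{0\cdots0}{0\cdots0}$, and then set $Q := -U S_0 U^\dagger S_\chi$. Writing $\ket{\psi} := U\ket{0}\ket{0} = \sqrt{p}\,\ket{0}\ket{\phi} + \sqrt{1-p}\,\ket{\psi^\perp}$, where $\ket{\psi^\perp}$ is the $\ell^2$-normalized ``bad'' component, I would observe that the two-dimensional subspace $V := \mathrm{span}\{\ket{0}\ket{\phi},\ket{\psi^\perp}\}$ is invariant under both $S_\chi$ and $U S_0 U^\dagger$ (the latter being the reflection through $\ket{\psi}$), so it is invariant under $Q$.

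Second, I would diagonalize $Q$ restricted to $V$. Setting $\theta \in [0,\pi/2]$ by $\sin\theta := \sqrt{p}$, the operator $Q|_V$ is the composition of two reflections whose axes make angle $\theta$, hence it is a rotation by $2\theta$; its eigenvalues on $V$ are $e^{\pm 2i\theta}$, and $\ket{\psi}$ decomposes as an equal-weight superposition of the two corresponding eigenvectors $\ket{\psi_\pm}$. This makes $\ket{\psi}$ an ideal input state for phase estimation of $Q$.

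Third, I would apply the standard quantum phase estimation subroutine with $M$ controlled applications of $Q$ to the state $\ket{\psi}$ (already produced by one call to $U$), measuring an $\lceil\log_2 M\rceil$-qubit ancilla to obtain an integer $y\in\{0,\ldots,M-1\}$ and outputting $\tilde p := \sin^2(\pi y / M)$. By the standard analysis of phase estimation (e.g.\ Theorem 11 of Brassard et al.), with probability at least $8/\pi^2$ the returned $y$ satisfies $|\pi y/M - \theta'| \le \pi/M$ for $\theta'\in\{\theta,\pi-\theta\}$, where in either case $\sin^2\theta' = p$.

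The main remaining obstacle is translating this additive error in the angle into the claimed additive error on $p$. For this I would write $\tilde p - p = \sin^2(\pi y/M) - \sin^2\theta'$, use the trigonometric identity $\sin^2 a - \sin^2 b = \sin(a-b)\sin(a+b)$, and bound the two factors. The factor $|\sin(a-b)|\le \pi/M$ contributes the $\pi^2/M^2$ term together with the remaining $\pi/M$ budget, while $|\sin(a+b)|\le 2\sqrt{p(1-p)} + O(\pi/M)$ after a first-order expansion around $\theta'$ yields the $2\pi\sqrt{p(1-p)}/M$ term; combining these and verifying $\tilde p \in [0,1]$ (which is automatic from the $\sin^2$ definition) gives \eq{AmpEst-1}. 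The total cost is $M$ controlled uses of $Q$, each of which invokes $U$ and $U^\dagger$ once, plus one initial application of $U$, which absorbs into the stated query count.
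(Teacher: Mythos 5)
Theorem~\ref{thm:AmpEst} is cited directly from Brassard, H\o yer, Mosca, and Tapp (2002); the paper gives no proof of its own. Your reconstruction is correct and follows the standard BHMT argument (phase estimation of the Grover iterate $Q = -US_0U^\dagger S_\chi$, whose restriction to the two-dimensional invariant subspace is a rotation by $2\theta$ with $\sin\theta=\sqrt{p}$, then the bound $|\sin^2 a - \sin^2 b| = |\sin(a-b)\sin(a+b)| \le \frac{\pi}{M}\bigl(2\sqrt{p(1-p)}+\frac{\pi}{M}\bigr)$ via Lemma~7 of that paper), so it agrees with the cited source.
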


In particular, if we take $M=\left\lceil 2\pi\left(\frac{2\sqrt{p}}{\epsilon}+ \frac{1}{\sqrt{\epsilon}}\right) \right\rceil=\Theta\left(\frac{\sqrt{p}}{\epsilon}+ \frac{1}{\sqrt{\epsilon}}\right)$ in \eq{AmpEst-1}, we have
\begin{align*}
|\tilde{p}-p|\leq\frac{2\pi\sqrt{p(1-p)}}{2\pi}\epsilon+\frac{\pi^{2}}{4\pi^{2}}\epsilon^{2}\leq\frac{\epsilon}{2}+\frac{\epsilon}{4}\leq\epsilon.
\end{align*}
Therefore, using only $\Theta(1/\epsilon)$ implementations of $U$ and $U^{\dagger}$, we could get an $\epsilon$-additive approximation of $p$ with success probability at least $8/\pi^{2}$, which is a quadratic speed-up compared to the classical sample complexity $\Theta(1/\epsilon^{2})$. The success probability can be boosted to $1-\nu$ by executing the algorithm for $\Theta(\log 1/\nu)$ times and taking the median of the estimates.

\subsection{Quantum singular value transformation}
Singular value decomposition (SVD) is one of the most important tools in linear algebra, generalizing eigen-decomposition of Hermitian matrices. Recently, \cite{gilyen2018QSingValTransf} proposed \emph{quantum singular value transformation} which turns our to be very useful for property testing. Mathematically, it is defined as follows:
\begin{definition}[Singular value transformation]\label{def:PolySVTrans}
		Let $f:\mathbb{R}\rightarrow\mathbb{C}$ be an even or odd function.
		Let $A\in\C^{\tilde{d}\times d}$ have the following singular value decomposition
		\cvskip{-3mm}
		\begin{equation*}
		A=\sum_{i=1}^{d_{\min}}\varsigma_i\ketbra{\tilde{\psi}_i}{\psi_i},
		\end{equation*}
		where $d_{\min}:=\min(d,\tilde{d})$. For the function $f$ we define the \emph{singular value transformation} on $A$ as
		\cvskip{-4mm}
		\begin{equation*}
			f^{(SV)}(A):=\left\{\begin{array}{rcl} \sum_{i=1}^{d_{\min}}f(\varsigma_i)\ketbra{\tilde{\psi}_i}{\psi_i}& &\text{if }f\text{ is odd, and}\\[\medskipamount]
			\sum_{i=1}^{d}f(\varsigma_i)\ketbra{\psi_i}{\psi_i}& & \text{if }f\text{ is even, where for } i\in[d]\setminus[d_{\min}]\text{ we define }\varsigma_i:=0.\end{array}\right.
		\end{equation*}		
\end{definition}

Quantum singular value transformation by real polynomials can be efficiently implemented on a quantum computer as follows:
\begin{theorem}{\!\bf\cite[Corollary 18]{gilyen2018QSingValTransf}}\label{thm:matchingParity}
Let $\H_U$ be a finite-dimensional Hilbert space and let $U,\Pi, \widetilde{\Pi}\in\eend{\H_U}$ be linear operators on $\H_U$ such that $U$ is a unitary, and $\Pi, \widetilde{\Pi}$ are orthogonal projectors. Suppose that $P=\sum_{k=0}^{n}a_k x^k\in\R[x]$ is a degree-$n$ polynomial such that
	\begin{itemize}
		\item $a_k\neq 0$ only if $\,k \equiv n \mod 2$, and
		\item for all $x\in[-1,1]\colon$ $| P(x)|\leq 1$.
	\end{itemize}
	Then there exist $\Phi\in\R^n$, such that
	\begin{equation*}
	P^{(SV)}\!\left(\widetilde{\Pi}U\Pi\right)\!=\!\left\{\begin{array}{rcl} \left(\bra+\otimes\widetilde{\Pi}\right)\Big(\ketbra00\!\otimes\! U_{\Phi}+\ketbra11\!\otimes\! U_{\!-\Phi}\Big) \left(\ket+\otimes\overset{\phantom{.}}{\Pi}\right)\!\!\!\!\!& &\text{if }n\text{ is odd, and}\\[\medskipamount]
	\left(\bra+\otimes\underset{\phantom{.}}{\Pi}\right)\Big(\ketbra00\!\otimes\! U_{\Phi}+\ketbra11\!\otimes\! U_{\!-\Phi}\Big) \left(\ket+\otimes\underset{\phantom{.}}{\Pi}\right)\!\!\!\!\!& & \text{if }n\text{ is even,}\end{array}\right.
	\end{equation*}
where $U_{\Phi}=e^{i\phi_{1} (2\widetilde{\Pi}-I)}U
			\prod_{j=1}^{(n-1)/2}\left(e^{i\phi_{2j} (2\Pi-I)}U^\dagger e^{i\phi_{2j+1} (2\widetilde{\Pi}-I)}U\right)$.\footnote{This is the mathematical form for odd $n$; even $n$ is defined similarly.}
\end{theorem}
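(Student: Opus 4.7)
My plan is a three-step reduction: first decompose $\H_U$ into two-dimensional subspaces invariant under the entire circuit, then reduce the action on each block to a canonical quantum signal processing (QSP) sequence parametrised by the singular value $\varsigma_i$, and finally cancel imaginary parts with the $\ket{+}$-controlled combination of $U_\Phi$ and $U_{-\Phi}$ appearing in the statement. For the first step, applying Jordan's lemma to the projector pair $(\Pi,U^\dagger\widetilde\Pi U)$ -- equivalently, reading off the SVD $\widetilde\Pi U\Pi=\sum_i \varsigma_i\ketbra{\tilde\psi_i}{\psi_i}$ -- produces vectors $\ket{\psi_i^\perp}\in\img{\Pi}^\perp$ and $\ket{\tilde\psi_i^\perp}\in\img{\widetilde\Pi}^\perp$ such that, for $\varsigma_i\in(0,1)$, the two-dimensional subspaces $V_i=\mathrm{span}\{\ket{\psi_i},\ket{\psi_i^\perp}\}$ and $\widetilde V_i=\mathrm{span}\{\ket{\tilde\psi_i},\ket{\tilde\psi_i^\perp}\}$ satisfy $U\colon V_i\leftrightarrow\widetilde V_i$ unitarily, while $(2\Pi-I)$ and $(2\widetilde\Pi-I)$ restrict to Pauli-$Z$ reflections in the chosen bases. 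Consequently any alternating word built from $U$, $U^\dagger$, $e^{i\phi(2\Pi-I)}$ and $e^{i\phi(2\widetilde\Pi-I)}$ preserves $V_i\oplus\widetilde V_i$, and the $\varsigma_i\in\{0,1\}$ cases leave over one-dimensional invariants that are handled separately.

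On each block the matrix of $U$ in the chosen bases is $\bigl(\begin{smallmatrix}\varsigma_i & \sqrt{1-\varsigma_i^2}\\\sqrt{1-\varsigma_i^2} & -\varsigma_i\end{smallmatrix}\bigr)$, so the atomic factor $U\,e^{i\phi(2\Pi-I)}U^\dagger$ acts on $\widetilde V_i$ as a standard QSP signal rotation conjugated by $e^{i\phi Z}$ with $\cos$-parameter $\varsigma_i$. The alternating product defining $U_\Phi$ therefore restricts, on each block, to the canonical length-$n$ QSP sequence with signal $\varsigma_i$. A standard QSP existence theorem asserts that for every real polynomial $P$ of degree $n$, parity $n\bmod 2$, and $\sup_{[-1,1]}|P|\leq 1$, one can complete it to a pair of real polynomials $(P,Q)$ of matching parities with $|P(x)|^2+(1-x^2)|Q(x)|^2\leq 1$, and choose $\Phi\in\R^n$ so that the relevant matrix element of $U_\Phi$ on each block equals $P(\varsigma_i)+i\sqrt{1-\varsigma_i^2}\,Q(\varsigma_i)$.

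To finish, I would observe that sending $\Phi\mapsto -\Phi$ replaces every phase factor $e^{i\phi(2\Pi-I)}$ by its conjugate, and (using the matching-parity hypothesis in an essential way) a direct $2\times 2$ calculation in each block shows the targeted matrix element becomes the complex conjugate $P(\varsigma_i)-i\sqrt{1-\varsigma_i^2}\,Q(\varsigma_i)$. The $\ket{+}$-controlled sum $\tfrac12(U_\Phi+U_{-\Phi})$ appearing in the statement therefore has block entry exactly $P(\varsigma_i)$, so sandwiching with $(\bra{+}\otimes\widetilde\Pi)$ and $(\ket{+}\otimes\Pi)$ for odd $n$, or with $(\bra{+}\otimes\Pi)$ on both sides for even $n$, collapses this to $\sum_i P(\varsigma_i)\ketbra{\tilde\psi_i}{\psi_i}$ or $\sum_i P(\varsigma_i)\ketbra{\psi_i}{\psi_i}$ respectively, matching \defn{PolySVTrans}. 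The degenerate $\varsigma_i\in\{0,1\}$ subspaces contribute $P(0)=0$ in the odd case and the expected projector term in the even case, consistent with the convention $\varsigma_i:=0$ for $i\in[d]\setminus[d_{\min}]$.

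The main obstacle is the QSP existence theorem used in the second step: given a target $P$, one must produce both a suitable $Q$ and an explicit phase sequence $\Phi$. The standard argument is an induction on $n$ that, at each stage, strips one root from the auxiliary Laurent polynomial $\mathcal{A}(z):=P\bigl(\tfrac{z+z^{-1}}{2}\bigr)+\tfrac{z-z^{-1}}{2}Q\bigl(\tfrac{z+z^{-1}}{2}\bigr)$, and the delicate points are the completion lemma guaranteeing that a parity-matched $Q$ obeying $|P|^2+(1-x^2)|Q|^2\leq 1$ exists (a Fej\'er--Riesz style argument) and the careful propagation of reality and parity constraints through the root-stripping induction. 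Once this QSP fact and the block decomposition of the first step are in hand, the remaining identification of the matrix element with $P^{(SV)}(\widetilde\Pi U\Pi)$ is essentially bookkeeping on the SVD.
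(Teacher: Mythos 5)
The paper imports this theorem verbatim from \cite{gilyen2018QSingValTransf} and does not prove it, but your sketch correctly reconstructs the argument in that reference: Jordan's lemma applied to the projector pair $(\Pi,\,U^\dagger\widetilde\Pi U)$ gives the two-dimensional invariant blocks, the restriction of the alternating word to each block is exactly a quantum signal processing sequence with signal $\varsigma_i$, the QSP completion lemma and root-stripping induction supply the phase vector $\Phi$, and the $\pm\Phi$ average cancels the $iQ\sqrt{1-x^2}$ part. One minor misattribution: the identity $U_{-\Phi}=\overline{U_\Phi}$ (entrywise complex conjugate in the Jordan block bases) follows simply from the fact that the $2\times2$ signal matrix $W(\varsigma_i)$ is real and each $e^{i\phi Z}$ goes to $e^{-i\phi Z}$, so that with $P,Q\in\R[x]$ the averaged matrix element is $P(\varsigma_i)$; the matching-parity hypothesis on $P$ is not what makes this conjugation work, but rather what guarantees that a length-$n$ QSP sequence realising $P$ can exist at all, since the $(1,1)$ entry of such a sequence is always a polynomial of parity $n\bmod 2$.
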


Thus for an even or odd polynomial $P$ of degree $n$, we can apply singular value transformation of the matrix $\widetilde{\Pi} U \Pi$ with $n$ uses of $U$, $U^\dagger$ and the same number of controlled reflections $I\!-\!2\Pi, I\!-\!2\widetilde{\Pi}$.

To apply singular value transformation corresponding to our problems, we need low-degree polynomial approximations to the following functions, which we construct in \apx{SVT-appendix}.

\begin{restatable}{lemma}{polyApx}\kern-1mm\emph{\textbf{(Polynomial approximations)}}\label{cor:polyApx}
	Let $\beta\in(0,1]$, $\eta\in(0,\frac{1}{2}]$ and $t\geq 1$. There exists polynomials $\tilde{P},\tilde{Q},\tilde{S}$ such that
	\begin{itemize}
		\item $\forall x\in [\frac{1}{t},1]\colon |\tilde{P}(x)-\frac{1}{2t x}|\leq\eta$, and $\,\forall x\in[-1,1]\colon -1\leq \tilde{P}(x)=\tilde{P}(-x)\leq 1$,
		\item $\forall x\in [-\frac{1-\beta}{t},\frac{1-\beta}{t}]\colon |\tilde{Q}(x)-tx|\leq\eta \cdot (tx)$, and $\,\forall x\in[-1,1]\colon \tilde{Q}(x)=\!-\tilde{Q}(-x)\leq 1$,
		\item $\forall x\in [\beta,1]\colon |\tilde{S}(x)-\frac{\ln(1/x)}{2\ln(2/\beta)}|\leq\eta$,	and $\,\forall x\in[-1,1]\colon -1\leq\tilde{S}(x)=\tilde{S}(-x)\leq 1$,		
	\end{itemize}
	moreover $\deg(\tilde{P})=\bigO{t\log\left(\frac{1}{\eta}\right)}$, $\deg(\tilde{Q})=\bigO{\frac{t}{\beta}\log\left(\frac{1}{\eta}\right)}$, and $\deg(\tilde{S})=\bigO{\frac{1}{\beta}\log\left(\frac{1}{\eta}\right)}$.
\end{restatable}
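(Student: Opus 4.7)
My plan is to obtain each of $\tilde{P},\tilde{Q},\tilde{S}$ from a truncated Chebyshev expansion of a suitably smooth auxiliary function. Evenness will be enforced by the substitution $y=x^{2}$ (so that an even polynomial in $x$ becomes a polynomial in $y$), oddness by multiplying by $x$, and the sup-norm bound $|\cdot|\leq 1$ on $[-1,1]$ by either the range of the target or a mild rescaling by $1-\bigO{\eta}$.

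For $\tilde{P}$ I will approximate $y\mapsto\frac{1}{2t\sqrt{y}}$ on $y\in[1/t^{2},1]$ by a polynomial $R$ and set $\tilde{P}(x):=R(x^{2})$. The only singularity of $1/\sqrt{y}$ (a branch point at $y=0$) lies at distance $\Theta(1/t^{2})$ from the interval, so a standard Bernstein-ellipse analysis gives Bernstein parameter $\rho-1=\Omega(1/t)$ and truncation degree $\bigO{t\log(1/\eta)}$ for additive error $\eta/2$. Since the target is bounded by $1/2$ on $[1/t^{2},1]$, a smooth polynomial cutoff vanishing on $[0,1/t^{2}]$ kills the growth of $R$ near $y=0$ and gives $|\tilde{P}|\leq 1$. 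For $\tilde{S}$ the same template works: approximate $y\mapsto-\ln(y)/\bigl(4\ln(2/\beta)\bigr)$ on $y\in[\beta^{2},1]$ by a polynomial $T$ and set $\tilde{S}(x):=T(x^{2})$; the branch point at $y=0$ sits at distance $\Theta(\beta^{2})$, giving Bernstein parameter $\rho-1=\Omega(\beta)$ and degree $\bigO{(1/\beta)\log(1/\eta)}$. The normalization $2\ln(2/\beta)$ is chosen precisely so that the target lies in $[0,1/2]$, and $|\tilde{S}|\leq 1$ is immediate.

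For $\tilde{Q}$ the relative-error condition is more subtle; my plan is to factor $\tilde{Q}(x)=tx\cdot R(x)$, where $R$ is an even polynomial approximating the scaled clipped constant $H(x):=\min\{1,\,1/(t|x|)\}$. This $H$ equals $1$ on the plateau $|x|\leq 1/t$ and decays like $1/(t|x|)$ beyond, with a single corner at $|x|=1/t$. I will remove the corner by convolution with a smooth compactly supported kernel of width $\sigma=\beta/t$; the smoothed $H_{\sigma}$ then equals $1$ on $[-(1-\beta)/t,(1-\beta)/t]$ exactly (since the plateau sits a distance $\sigma$ inside the corner) and is analytic in a complex strip of height $\Theta(\sigma)$. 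Truncating its Chebyshev series at degree $\bigO{(1/\sigma)\log(1/\eta)}=\bigO{(t/\beta)\log(1/\eta)}$ gives $R$ with additive error $\bigO{\eta/t}$; the identity $\tilde{Q}(x)=tx R(x)$ then translates this to relative error $\eta$ on the plateau, and the sup-norm bound on $[-1,1]$ follows from $t|x|H(x)\leq 1$ together with a $(1+\bigO{\eta})$-rescaling absorbed into $\eta$.

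The main obstacle will be the construction of $\tilde{Q}$: to control both the relative error on the narrow plateau and the sup-norm on all of $[-1,1]$, the Chebyshev truncation must be sharpened from additive error $\bigO{\eta}$ to $\bigO{\eta/t}$, costing only an additive $\log t$ absorbed into $\log(1/\eta)$ under the mild assumption $t\leq\poly(1/\eta)$. The other two bounds reduce to classical Bernstein-ellipse estimates for $1/\sqrt{y}$ and $-\ln y$, so the novelty is concentrated in the $\tilde{Q}$ construction.
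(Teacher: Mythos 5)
Your proposal takes a genuinely different route from the paper. The paper obtains $\tilde{P}$ and $\tilde{Q}$ by directly citing Corollary~67 and Theorem~30 of \cite{gilyen2018QSingValTransf}, and constructs $\tilde{S}$ via the local-Taylor-series machinery of Corollary~66 there (restated as \lem{boundedTaylorApx}), followed by the symmetrization $\tilde{S}(x)=S(x)+S(-x)$. You instead propose to truncate the Chebyshev expansion of the target on the good interval, invoke a Bernstein-ellipse estimate for the decay rate, and enforce parity by the substitution $y=x^2$ or the factor $x$. The Bernstein-parameter estimates you give ($\rho-1=\Omega(1/t)$ for $\tilde P$, $\rho-1=\Omega(\beta)$ for $\tilde S$) and the resulting $\bigO{(\rho-1)^{-1}\log(1/\eta)}$ degrees are correct for the approximation error on the good interval.

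There is however a genuine gap common to all three parts: the global bound $|\cdot|\leq 1$ on $[-1,1]$, which you treat as nearly automatic, is not. A truncated Chebyshev expansion on $[\beta^2,1]$ (resp.\ $[1/t^2,1]$) controls the polynomial only on that interval; on $[0,\beta^2]$ (resp.\ $[0,1/t^2]$) the scaled Chebyshev polynomials are \emph{exponentially large} in the degree. Hence ``$|\tilde S|\leq 1$ is immediate'' is false as stated, and the ``smooth polynomial cutoff vanishing on $[0,1/t^2]$'' you invoke for $\tilde P$ is the nontrivial ingredient: the cutoff must decay fast enough (and have high enough degree) to suppress the exponential growth of $R$ near $y=0$, and you must then re-verify the approximation error after multiplying by it. This taming of the truncation outside the approximation window is exactly what Corollary~66 of \cite{gilyen2018QSingValTransf} packages, and without it your argument is incomplete for both $\tilde P$ and $\tilde S$.

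The $\tilde Q$ construction has two further problems. First, mollification by a smooth \emph{compactly supported} kernel produces a $C^\infty$ function that is not analytic in any complex neighborhood of $[-1,1]$; its Chebyshev coefficients therefore decay faster than any polynomial but \emph{not geometrically}, so the claimed degree $\bigO{(t/\beta)\log(1/\eta)}$ does not follow. Using an analytic mollifier (e.g.\ Gaussian) restores the decay rate but sacrifices the exact plateau $H_\sigma\equiv 1$ on $[-(1-\beta)/t,(1-\beta)/t]$ you rely on. Second, even with a uniform kernel one can compute that for $x$ just past $1/t$ one has $t|x|\,H_\sigma(x)=\frac{1}{2v}\ln\frac{1+v}{1-v}=1+\Theta(v^2)$ with $v=\sigma/x=\Theta(\beta)$, so the overshoot is $\Theta(\beta^2)$, not $\bigO{\eta}$. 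Rescaling to enforce $|\tilde Q|\leq 1$ then introduces $\Theta(\beta^2)$ multiplicative error on the plateau, which is $\bigO{\eta}$ only if $\beta\lesssim\sqrt{\eta}$; the statement allows $\beta$ up to $1$, and the application in \thm{classical-l2-purified} uses $\beta=\Theta(1)$. The amplification polynomial of Theorem~30 of \cite{gilyen2018QSingValTransf} is built by a different method that avoids both issues.
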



\section{Entropy estimation}\label{sec:entropy}
\subsection{Classical distributions with purified quantum query-access}\label{sec:classical-entropy-purified}
Recall that we introduced purified quantum query-access in \defn{purified-quantum-query}. In particular, for a classical distribution $p$ on $\range{n}$, we are given a unitary $U_{p}$ acting on $\C^{n\times n}$ such that
\begin{align}\label{eq:classical-entropy-purified}
U_{p}|0\>_A|0\>_B=\ket{\psi_p}=\sum_{i=1}^{n}\sqrt{p_i}|\phi_i\>_A|i\>_B.
\end{align}
We use $U_{p}$ and $U_{p}^{\dagger}$ to estimate the Shannon entropy $H(p)$:
\begin{theorem}\label{thm:classical-entropy-purified}
For any $0<\epsilon<1$, we can estimate $H(p)$ with accuracy $\epsilon$ with success probability at least $2/3$ using $\bigO{\frac{\sqrt{n}}{\epsilon^{1.5}}\log^{1.5}\!\left(\frac{n}{\eps}\right)\log\left(\frac{\log n}{\eps}\right)}$ calls to $U_{p}$ and $U_{p}^{\dagger}$.
\end{theorem}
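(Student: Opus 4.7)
The plan is to instantiate the general recipe from \subsec{techniques}. First I would construct the projected unitary encoding of~\eq{probDistSingencoding} using two queries to $U_p$; its non-zero singular values are exactly $\sqrt{p_1},\ldots,\sqrt{p_n}$ with right singular vectors $\ket{0}_{A_1}\ket{0}_{B_1}\ket{i}_C$. Then I set $\beta = \tilde{\Theta}(\sqrt{\epsilon/n})$ and $\eta = \tilde{\Theta}(\epsilon/\log(n/\epsilon))$ and invoke \hyperref[cor:polyApx]{Lemma~\ref*{cor:polyApx}} to produce an even polynomial $\tilde{S}$ of degree $\tilde{O}(1/\beta)=\tilde{O}(\sqrt{n/\epsilon})$ satisfying $|\tilde{S}(x)-\ln(1/x)/(2\ln(2/\beta))|\leq\eta$ on $[\beta,1]$ and $|\tilde{S}|\leq 1$ on the whole of $[-1,1]$. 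Feeding this polynomial into \thm{matchingParity} yields a quantum circuit $V$ of cost $\tilde{O}(\sqrt{n/\epsilon})$ queries to $U_p,U_p^{\dagger}$ whose effective action on the $C$-register, upon post-selecting $A_1B_1=\ket{0}\ket{0}$ and measuring a parity ancilla $E$ in $\ket{+}$, equals the diagonal operator $T=\sum_i\tilde{S}(\sqrt{p_i})\ketbra{i}{i}$.

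Next I would set up the following ``overlap'' circuit. Starting from $\ket{0}_{A_1B_1A_2B_2}\ket{+}_E$, apply $U_p$ on a fresh register pair $(A_2,B_2)$ to prepare the purification $\ket{\psi_p}_{A_2B_2}$, apply $V$ with its data register identified with $B_2$, and finally apply $U_p^{\dagger}$ on $(A_2,B_2)$. A direct expansion shows that the amplitude of the flag state $\ket{0}_{A_1B_1A_2B_2}\ket{+}_E$ equals
\begin{equation*}
\bra{\psi_p}\!\left(I_{A_2}\otimes T_{B_2}\right)\!\ket{\psi_p}\,=\,\sum_{i=1}^{n} p_i\,\tilde{S}\!\left(\sqrt{p_i}\right).
\end{equation*}
For bins with $p_i\geq\beta^2$ the summand matches $p_i\ln(1/\sqrt{p_i})/(2\ln(2/\beta))$ up to additive error $\eta p_i$; for bins with $p_i<\beta^2$, both the target summand $p_i\ln(1/\sqrt{p_i})/(2\ln(2/\beta))$ and the actual summand are bounded in absolute value by $p_i$ (using $|\tilde{S}|\leq 1$ and $\ln(1/\sqrt{p_i})\leq 2\ln(2/\beta)$), so their cumulative distortion is at most $O(n\beta^2)=O(\epsilon/\ln(n/\epsilon))$. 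Combining these contributions, the amplitude equals $H(p)/(4\ln(2/\beta))\pm O(\epsilon/\ln(n/\epsilon))$.

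I would finish with \thm{AmpEst}, reading off $\sin\theta$ (whose phase-estimation precision is $O(1/M)$) with $M=\tilde{O}(\ln(n/\epsilon)/\epsilon)$ iterations, so that the amplitude is estimated to additive precision $\epsilon/(4\ln(2/\beta))$ and hence $H(p)=4\ln(2/\beta)\cdot\mathrm{(amplitude)}$ to additive precision~$\epsilon$; a standard median trick lifts the success probability from $8/\pi^{2}$ to $2/3$. Each amplitude-estimation iteration uses $2\deg(\tilde{S})=\tilde{O}(\sqrt{n/\epsilon})$ queries to $U_p,U_p^{\dagger}$, giving total complexity $\tilde{O}(\sqrt{n}/\epsilon^{1.5})$ with the stated polylog factors (one factor of $\sqrt{\log(n/\epsilon)}$ from $1/\beta$, one factor of $\log(n/\epsilon)$ from $M$, and a $\log(\log n/\epsilon)$ from $\log(1/\eta)$ in $\deg(\tilde{S})$).

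The main obstacle is the error bookkeeping at the two ends of the spectrum: outside the approximation window $[\beta,1]$ one only knows $|\tilde{S}|\leq 1$, so the truncation loss must be absorbed by the $\epsilon$-budget rather than by the polynomial. Choosing $\beta\asymp\sqrt{\epsilon/n}$ (up to log factors) is precisely what balances this truncation loss $\Theta(n\beta^{2}\ln(1/\beta))$ against the per-iteration cost $\tilde{O}(1/\beta)$, and is the step that pins the final $\epsilon^{-3/2}$ dependence; it also forces one to argue that amplitude estimation is used in its $\sin\theta$-Lipschitz form rather than reading off the probability $\sin^{2}\theta$, since the latter would degrade the scaling to $\epsilon^{-2}$ in the regime of small $H(p)$.
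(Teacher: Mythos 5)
Your proposal matches the paper's proof step for step: the projected unitary encoding of Eq.~\eqref{eq:probDistSingencoding}, the log-approximating polynomial $\tilde{S}$ from Lemma~\ref{cor:polyApx} with $\beta\approx\sqrt{\epsilon/(n\log(n/\epsilon))}$ and $\eta\approx\epsilon/\log(n/\epsilon)$, singular value transformation via Theorem~\ref{thm:matchingParity} applied to $\ket{\psi_p}$, and amplitude estimation with $M\approx\log(n/\epsilon)/\epsilon$ reading off the amplitude rather than the probability. The split of the error analysis into $p_i\geq\beta^2$ and $p_i<\beta^2$ is also the paper's.

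One small slip in the tail bound: the inequality $\ln(1/\sqrt{p_i})\leq 2\ln(2/\beta)$ for all $p_i<\beta^2$ is false — it requires $p_i\geq(\beta/2)^4$, and $p_i$ can be arbitrarily small. What actually holds is that $x\mapsto x\ln(1/x)$ is monotone increasing on $(0,1/e]$, so for $p_i<\beta^2$ each ideal summand satisfies $\frac{p_i\ln(1/p_i)}{4\ln(2/\beta)}\leq\frac{\beta^2\ln(1/\beta^2)}{4\ln(2/\beta)}=O(\beta^2)$ (this is the paper's Eq.~\eqref{eq:classical-entropy-error-smallprob}), not $\leq p_i$. That still gives the cumulative bound $O(n\beta^2)=O(\epsilon/\log(n/\epsilon))$ you state, so your conclusion is right, but the justification should go through monotonicity rather than the false per-term bound. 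Your remark that the $\sin\theta$-Lipschitz form of amplitude estimation must be used is correct and in fact a point the paper leaves implicit.
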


\begin{proof}
The general idea is to first construct a unitary matrix with singular values $\sqrt{p_{1}},\ldots,\sqrt{p_{n}}$. We use the construction of Eq.~\eqref{eq:probDistSingencoding} and apply singular value transformation (\thm{matchingParity}) by a polynomial $\tilde{S}$ constructed in \cor{polyApx}, setting $\eta=\frac{\epsilon}{24\ln(2/\beta)}$ and $\beta=\sqrt{\Delta}$ for $\Delta=\frac{\epsilon}{4n\ln (n/\epsilon)}$. Notice that this $\Delta$ satisfies
\begin{align}\label{eq:classical-entropy-Delta}
\Delta\Big(\ln\Big(\frac{1}{\Delta}\Big)+1\Big)&=\frac{\epsilon}{4n\ln (n/\epsilon)}\cdot\ln\frac{4en\ln(n/\epsilon)}{\epsilon}\leq\frac{\epsilon}{4n\ln (n/\epsilon)}\cdot\ln\frac{n^{2}}{\epsilon^{2}}=\frac{\epsilon}{2n},
\end{align}
provided that $\frac{n}{\eps}\geq 42$. Note that the polynomial $\tilde{S}$ satisfies both conditions in \thm{matchingParity}. Applying the singular value transformed version of the operator \eqref{eq:probDistSingencoding} to the state $\ket{\psi_p}$ results in 
\begin{align}\label{eq:classical-entropy-purified-transformed}
|\widetilde{\Psi_{p}}\>=\sum_{i=1}^{n}\sqrt{p_i}\tilde{S}(\sqrt{p_i})|\phi_i\>_A|i\>_B|0\> + \ldots|1\>.
\end{align}
Preparing $|\widetilde{\Psi_{p}}\>$ costs $\deg \tilde{S}=\bigO{\frac{1}{\beta}\log\left(\frac{1}{\eta}\right)}=\bigO{\sqrt{\frac{n}{\epsilon}\log\left(\frac{n}{\eps}\right)}\log\left(\frac{\log n}{\eps}\right)}$ uses of $U_{p}$ and $U_{p}^{\dagger}$ and the same number of controlled reflections through $\Pi, \widetilde{\Pi}$.
Furthermore, Eq. \eq{truncApx-symmetry-error} implies that for all $i$ such that $p_{i}\geq\Delta$,
\begin{align}\label{eq:classical-entropy-error-largeprob}
\Big|\frac{p_{i}\ln(1/p_{i})}{4\ln(2/\beta)}-p_{i}\tilde{S}(\sqrt{p_{i}})\Big|=p_{i}\cdot\Big|\frac{\ln(1/\sqrt{p_{i}})}{2\ln(2/\beta)}-\tilde{S}(\sqrt{p_{i}})\Big|\leq\eta p_{i}.
\end{align}
For all $i$ such that $p_{i}<\Delta$, we have
\begin{align}\label{eq:classical-entropy-error-smallprob}
\Big|\frac{p_{i}\ln(1/p_{i})}{4\ln(2/\beta)}-p_{i}\tilde{S}(\sqrt{p_{i}})\Big|\leq\frac{p_{i}\ln(1/p_{i})+p_{i}}{4\ln(2/\beta)}\leq\frac{\Delta(\ln(\frac{1}{\Delta})+1)}{4\ln(2/\beta)}\leq\frac{\epsilon}{8n\ln(2/\beta)},
\end{align}
where the first inequality comes from the fact that $|\tilde{S}(x)|\leq 1$ for all $x\in[-1,1]$, the second inequality comes from the monotonicity of $x(\ln(1/x)+1)$ on $(0,\frac{1}{\Delta}]$, and the third inequality comes from \eq{classical-entropy-Delta}. As a result of \eq{classical-entropy-purified}, \eq{classical-entropy-error-largeprob}, and \eq{classical-entropy-error-smallprob}, we have
\begin{align*}
\left|\big(\<\psi_{p}|\otimes\<0|\big)|\widetilde{\Psi_{p}}\>-\frac{H(p)}{4\ln(2/\beta)}\right|&=\left|p_{i}\tilde{S}(\sqrt{p_{i}})-\sum_{i=1}^{n}\frac{p_{i}\log(1/p_{i})}{4\ln(2/\beta)}\right| \\
&\leq\sum_{i\colon p_{i}<\Delta}\frac{\epsilon}{8n\ln(2/\beta)}+\sum_{i\colon p_{i}\geq\Delta}\eta p_{i} \\
&\leq\frac{\epsilon}{8\ln(2/\beta)}+\frac{\epsilon}{24\ln(2/\beta)}=\frac{\epsilon}{6\ln(2/\beta)}.
\end{align*}
Therefore, $|4\ln(2/\beta)(\<\psi_{p}|\otimes\<0|)|\widetilde{\Psi_{p}}\>-H(p)|\leq 2\epsilon/3$. By \thm{AmpEst}, we can use $\Theta(\ln(1/\beta)/\epsilon)$ applications of the unitaries (and their inverses) that implement $|\psi_{p}\>$ and $|\widetilde{\Psi_{p}}\>$ to estimate $(\<\psi_{p}|\otimes\<0|)|\widetilde{\Psi_{p}}\>$ within additive error $\frac{\epsilon}{12\ln(2/\beta)}$. In total, this estimates $H(p)$ within additive error $\frac{\epsilon}{12\ln(2/\beta)}\cdot 4\ln(2/\beta)+\frac{2\epsilon}{3}=\epsilon$ with success probability at least $8/\pi^{2}$. The total complexity of the algorithm is
\begin{equation*}
\bigO{\frac{\ln(1/\beta)}{\epsilon}}\cdot\bigO{\sqrt{\frac{n}{\epsilon}\log\left(\frac{n}{\eps}\right)}\log\left(\frac{\log n}{\eps}\right)}=\bigO{\frac{\sqrt{n}}{\epsilon^{1.5}}\log^{1.5}\!\left(\frac{n}{\eps}\right)\log\left(\frac{\log n}{\eps}\right)}. \qedhere
\end{equation*}
\cvskip{-5mm}
\end{proof}

\subsection{Density matrices with purified quantum query-access}\label{sec:quantum-entropy-purified}
For a density matrix $\rho$, we also assume the purified quantum query-access in \defn{purified-quantum-query}, i.e., a unitary oracle $U_{\rho}$ acting as $U_{\rho}\ket{0}_A\ket{0}_B=\ket{\rho}=\sum_{i=1}^{n}\sqrt{p_i}\ket{\phi_i}_A\ket{\psi_i}_B$. We use $U_{\rho}$ and $U_{\rho}^{\dagger}$ to estimate the von-Neumann entropy $H(\rho)=-\Tr[\rho\log\rho]$:
\begin{theorem}\label{thm:quantum-entropy-purified}
For any $0<\epsilon<1$, we can estimate $H(p)$ with accuracy $\epsilon$ with success probability at least $2/3$ using $\bOt{\frac{n}{\eps^{1.5}}}$ calls to $U_{\rho}$ and $U_{\rho}^{\dagger}$.
\end{theorem}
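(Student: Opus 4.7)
The plan is to follow the strategy of \thm{classical-entropy-purified} but using the density-operator encoding \eqref{eq:densityOpSingencoding} (whose singular values are $\sqrt{p_i/n}$) instead of the classical encoding \eqref{eq:probDistSingencoding}. The target polynomial is the same $\tilde{S}$ from \cor{polyApx} approximating $\ln(1/x)/(2\ln(2/\beta))$ on $[\beta,1]$, so that applying its singular value transformation to \eqref{eq:densityOpSingencoding} multiplies the $i$-th eigenvector of $\rho$ by $\tilde{S}(\sqrt{p_i/n})\approx \ln(n/p_i)/(4\ln(2/\beta))$. I would choose a cutoff $\Delta=n\beta^2=\Theta(\epsilon/(n\log(n/\epsilon)))$ so that the eigenvalues $p_i<\Delta$ together contribute at most $O(\epsilon)$ to $H(\rho)$, yielding $\beta=\Theta(\sqrt{\epsilon}/(n\sqrt{\log(n/\epsilon)}))$ and $\ln(2/\beta)=\Theta(\log(n/\epsilon))$; together with $\eta=\Theta(\epsilon/\log(n/\epsilon))$ this ensures that the approximation is accurate on every singular value $\sqrt{p_i/n}$ with $p_i\geq\Delta$.

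Next, I would apply quantum singular value transformation (\thm{matchingParity}) to \eqref{eq:densityOpSingencoding} with $\tilde{S}$; invoking the resulting unitary on the purification $\ket{\rho}$ (padded with the $\ket{0}\ket{0}$ ancilla registers and $\ket{+}$-qubit required by the even case) yields a state of the form
\begin{equation*}
\ket{\widetilde{\Psi_\rho}}=\sum_{i=1}^n \sqrt{p_i}\,\tilde{S}\!\bigl(\sqrt{p_i/n}\bigr)\ket{\phi_i}_A\ket{\psi_i}_B\ket{0}_{\mathrm{flag}}+\ldots\ket{1}_{\mathrm{flag}}.
\end{equation*}
The overlap $\alpha:=\bigl(\bra{\rho}\otimes\bra{0}_{\mathrm{flag}}\bigr)\ket{\widetilde{\Psi_\rho}}=\sum_{i=1}^n p_i\,\tilde{S}(\sqrt{p_i/n})$ satisfies $\lvert 4\ln(2/\beta)\cdot\alpha-(H(\rho)+\ln n)\rvert\leq 2\epsilon/3$, by a calculation parallel to \eqref{eq:classical-entropy-error-largeprob}--\eqref{eq:classical-entropy-error-smallprob}: for $p_i\geq\Delta$ the SVT approximation error is at most $\eta p_i$ per term, summing to $\eta=O(\epsilon/\log(n/\epsilon))$, while for $p_i<\Delta$ the contribution is bounded by $n\Delta(1+\ln(n/\Delta)/(4\ln(2/\beta)))=O(\epsilon/\log(n/\epsilon))$ by the choice of $\Delta$. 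Estimating $\alpha$ to accuracy $\Theta(\epsilon/\log(n/\epsilon))$ via amplitude estimation (\thm{AmpEst}) combined with a Hadamard test between $\ket{\widetilde{\Psi_\rho}}$ and $\ket{\rho}\ket{0}_{\mathrm{flag}}$ takes $\bOt{1/\epsilon}$ rounds; rescaling by $4\ln(2/\beta)$ and subtracting the known $\ln n$ then recovers $H(\rho)$ within $\epsilon$.

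Each amplitude-estimation round invokes the SVT once, costing $\deg\tilde{S}=O(\beta^{-1}\log(1/\eta))=\bOt{n/\sqrt{\epsilon}}$ queries to $U_\rho$ and $U_\rho^\dagger$, so the overall complexity is $\bOt{1/\epsilon}\cdot\bOt{n/\sqrt{\epsilon}}=\bOt{n/\epsilon^{1.5}}$ as claimed. The main obstacle --- and the reason for the $n$- (rather than $\sqrt{n}$-) scaling compared to the classical case --- is structural: without a~priori access to the eigenbasis of $\rho$, the projected unitary encoding \eqref{eq:densityOpSingencoding} only exposes singular values $\sqrt{p_i/n}$ rather than $\sqrt{p_i}$. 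This extra $1/\sqrt{n}$ factor forces $\beta$ to be $\sqrt{n}$ times smaller than in the classical setting, inflating $\deg\tilde{S}$ by the same factor; \tab{entropy} suggests this loss is essentially unavoidable.
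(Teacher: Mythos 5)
Your proposal is correct and follows essentially the same route as the paper: instantiate the density-operator projected encoding \eqref{eq:densityOpSingencoding} whose singular values are $\sqrt{p_i/n}$, run the same $\tilde{S}$-based singular value transformation and amplitude estimation as in \thm{classical-entropy-purified}, and absorb the $\sqrt{n}$-fold shrinkage of the singular values into a correspondingly smaller $\beta$, which inflates $\deg\tilde{S}$ by a factor $\sqrt{n}$ and gives the claimed $\bOt{n/\eps^{1.5}}$ bound. You in fact spell out two details the paper leaves implicit in its one-line proof — that $\tilde{S}(\sqrt{p_i/n})\approx\ln(n/p_i)/(4\ln(2/\beta))$ produces an additive $\ln n$ offset to be subtracted at the end, and that a Hadamard-test--style amplitude estimation is needed to extract the (signed) overlap — both of which are indeed required for a complete argument.
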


\begin{proof}
We use the construction of Eq.~\eqref{eq:densityOpSingencoding}. The proof is essentially the same as that of \thm{classical-entropy-purified} proceeding by constructing singular value transformation via \thm{matchingParity}, with the only difference that all probabilities are rescaled by a factor of $1/\sqrt{n}$ in \eq{densityOpSingencoding}; as a result, the number of calls to $U_{\rho}$ and $U_{\rho}^{\dagger}$ is blown up to $\bOt{\sqrt{n}\cdot\frac{\sqrt{n}}{\eps^{1.5}}}=\bOt{\frac{n}{\eps^{1.5}}}$.
\end{proof}


\section{Robust testers for \texorpdfstring{$\ell^{2}$}{l2}-closeness with purified query-access}\label{sec:testl2equality}
First we give an $\ell^{2}$-closeness tester for unknown classical distributions $p,q$.
\begin{theorem}\label{thm:classical-l2-purified}
Given purified quantum query-access for classical distributions $p,q$ as in \defn{purified-quantum-query}, for any $\nu,\epsilon\in(0,1)$ the quantum query complexity of distinguishing the cases $\nrm{p-q}_{2}\geq\eps$ and $\nrm{p-q}_{2}\leq(1-\nu)\eps$ with success probability at least $2/3$ is 
$\bigO{\frac{1}{\nu\eps}\log^3\left(\frac{1}{\nu\eps}\right)\log\log\left(\frac{1}{\nu\eps}\right)}$.
\end{theorem}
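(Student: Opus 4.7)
My plan follows the general recipe of \subsec{techniques}: construct appropriate block encodings, apply singular value transformation, and finish with amplitude estimation. For the setup, tracing out the $A$-register of $|\psi_p\rangle\langle\psi_p|$ yields the diagonal density operator $\rho_p=\mathrm{diag}(p)$, so $O(1)$ queries to $U_p$ and $U_p^\dagger$ give a block encoding of $\mathrm{diag}(p)$ with subnormalization $1$, and similarly for $q$. A linear combination of unitaries then yields block encodings of $\mathrm{diag}((p\pm q)/2)$, each with $O(1)$ query cost. The purification $|\psi_\sigma\rangle$ of $\sigma:=(\rho_p+\rho_q)/2$ can be prepared by coin-flipping between applications of $U_p$ and $U_q$.

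The key reduction is that applying a block-encoded diagonal operator $M=\mathrm{diag}(m_i)$ (with appropriate subnormalization) to the $B$-register of $|\psi_\sigma\rangle$ and measuring the flag yields success probability proportional to $\sum_i m_i^2(p_i+q_i)/2$. Choosing $m_i$ proportional to $(p_i-q_i)/\sqrt{p_i+q_i}$ on heavy indices (those with $p_i+q_i\geq\tau$) and $m_i=0$ on light indices makes this probability a fixed constant times $\sum_{p_i+q_i\geq\tau}(p_i-q_i)^2$, which differs from $\|p-q\|_2^2$ by at most $O(\tau)$ (using $(p_i-q_i)^2\leq\tau(p_i+q_i)$ on light indices). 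Taking $\tau=\Theta(\nu\eps^2)$ keeps this truncation error within the precision budget. I build $M$ by applying SVT (\thm{matchingParity}) with a polynomial approximating $1/\sqrt{x}$ on $[\tau,1]$ (in the spirit of \cor{polyApx}) to the block encoding of $\mathrm{diag}(p+q)$, and then multiplying by the block encoding of $\mathrm{diag}(p-q)$. Since the resulting flag probability is $\Theta(\|p-q\|_2^2)=\Theta(\eps^2)$ and must be resolved to additive precision $\Theta(\nu\eps^2)$, \thm{AmpEst} gives $O(1/(\nu\eps))$ amplitude-estimation iterations.

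A na\"ive single-threshold implementation has per-iteration SVT cost $\widetilde{O}(\tau^{-1/2})=\widetilde{O}(1/(\sqrt{\nu}\eps))$, giving a suboptimal total of $\widetilde{O}(1/(\nu^{3/2}\eps^2))$. To attain the claimed $\widetilde{O}(1/(\nu\eps))$, I invoke divide-and-conquer over the magnitude of $p_i+q_i$: partition the relevant range into $O(\log(1/(\nu\eps)))$ geometric buckets, and for each bucket run a dedicated amplitude-estimation subroutine whose SVT polynomial support is matched to that bucket's scale and whose AE precision is tuned to the bucket's maximal possible $\ell^2$-contribution. Carefully balancing these parameters so that each bucket's total work is $\widetilde{O}(\polylog/(\nu\eps))$ and summing over the $O(\log)$ contributions yields the claimed complexity.

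The hardest step will be the divide-and-conquer analysis: designing SVT polynomials that approximate $1/\sqrt{x}$ restricted to narrow intervals with degree $\widetilde{O}(\tau_k^{-1/2})$ rather than $\widetilde{O}(\tau_k^{-1})$ (na\"ive smoothed-indicator degrees would spoil the bound), correctly partitioning the $\nu\eps^2$ error budget across the $O(\log)$ buckets, and verifying that amplitude-estimation and SVT approximation errors compose additively within budget. The $\log^3(1/(\nu\eps))\log\log(1/(\nu\eps))$ overhead will reflect the bucket count, the SVT approximation precision, the boosting of AE success probability, and the $\log\log$ factor characteristic of the $1/\sqrt{x}$ polynomial approximation.
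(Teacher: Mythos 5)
Your high-level plan---bin indices by the magnitude of $p_i+q_i$, estimate each bucket's contribution separately, and tune the SVT precision and amplitude-estimation budget per bucket---is the same divide-and-conquer architecture the paper uses, and it can be made to give the stated bound. However, there are two concrete issues with the sketch as written.

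First, the single-threshold arithmetic is off in a way that matters for understanding what the bucketing must accomplish. You correctly note $|m_i|\leq\sqrt{p_i+q_i}\leq 1$, but you propose to build $M=\mathrm{diag}(m_i)$ by composing a block encoding of $\mathrm{diag}(p-q)$ with an SVT approximation of $1/\sqrt{x}$ on $[\tau,1]$. Since $1/\sqrt{x}$ reaches $1/\sqrt{\tau}$ on that interval, the best achievable subnormalization for that step is $\Theta(\sqrt{\tau})$, so the flag probability is $\Theta(\tau\,\nrm{p-q}_2^2)$, not $\Theta(\nrm{p-q}_2^2)$. With $\tau=\Theta(\nu\eps^2)$ this gives AE cost $\Theta(1/(\sqrt{\tau}\,\nu\eps))=\Theta(1/(\nu^{3/2}\eps^2))$ per estimate, for a total of $\widetilde{O}(1/(\nu^2\eps^3))$, which is worse than the $\widetilde{O}(1/(\nu^{3/2}\eps^2))$ you quote. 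You discard this route anyway, but the miscount shows the subnormalization of the inverse-square-root transform is exactly the obstruction that the bucketing has to neutralize, and your sketch never says how.

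Second, the bucketing mechanism is left open, and the device the paper uses is worth flagging because it dissolves the difficulty you anticipate. You propose to restrict the SVT polynomial support to each bucket (smoothed indicators composed with $1/\sqrt{x}$), and you flag the indicator degree as the hard part. In fact a smoothed indicator of $p_i+q_i\in[2^{-k-1},2^{-k+1}]$ acts on singular values at scale $2^{-k/2}$ with transition width $\Theta(2^{-k/2})$, so the degree is $\widetilde{O}(2^{k/2})$ and this is not the bottleneck; what you would still need to verify is that the overlapping soft indicators sum to one, so the bucket estimates add up to $\nrm{p-q}_2^2$ up to the light-tail error $O(\theta)$. The paper avoids this altogether: \alg{Magnitude} is a coherent amplitude-estimation subroutine that writes a probabilistic label $k$ into an ancilla register with probabilities $s_k(x)$ summing to $1$, so the partition-of-unity property is automatic and conditioning on the label pins the subsequent SVT maps to a single scale. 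Within bucket $k$ it then (i) approximates $\sqrt{2^{-k-2}/(p+q)}$ with degree $\widetilde{O}(2^{k/2})$ and output $\Theta(1)$ (no net subnormalization, since $p+q\approx 2^{-k}$), and (ii) \emph{amplifies} $p-q$ by $\Theta(2^{k})$---valid precisely because $|p-q|\leq p+q\approx 2^{-k}$ inside the bucket---so the bucket's flag probability is $\Theta(2^{k})$ times its $\ell^2$-contribution. Consequently the AE budget per bucket is $\bigO{|K|2^{-k/2}/(\nu\eps)}$, which cancels the $\bigO{2^{k/2}\log(1/(\nu\eps))}$ per-iteration cost, giving $\bOt{1/(\nu\eps)}$ per bucket and $\bigO{\frac{1}{\nu\eps}\log^3(\frac{1}{\nu\eps})\log\log(\frac{1}{\nu\eps})}$ overall. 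This amplification step and the resulting $2^{k/2}\cdot 2^{-k/2}$ cancellation are what your phrase "carefully balancing these parameters" is silently covering; they are the crux of the proof and must be spelled out.
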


\begin{proof}
The main idea is to first bin the $x$ elements based on the approximate value of $p(x)+q(x)$, then apply fine-tuned algorithms exploiting the knowledge of the approximate value of $p(x)+q(x)$.

Using amplitude estimation for any $k\in\N$  we can construct an algorithm $\A_k$ that for any input $x$ with $p(x)+q(x)\geq 2^{-k}$ outputs ``greater'' with probability at least $2/3$, and for any $x$ with $p(x)+q(x)\leq 2^{-k-1}$ outputs ``smaller'' and uses $\bigO{2^{\frac{k}{2}}}$ queries to $U_{p}$ and $U_{q}$.  Using $\bigO{\log(\frac{1}{\nu\eps}))}$ repetitions we can boost the success probability to $1-\bigO{\poly\left(\nu\eps\right)}$. Since our algorithm only needs to succeed with constant probability, and will use these subroutines at most $\frac{1}{\poly\left(\nu\eps\right)}$ times, we can ignore the small failure probability. Therefore in the rest of the proof we assume without loss of generality, that $\A_k$ that solves perfectly the above question with (query) complexity  $\bigO{2^{\frac{k}{2}}\log(\frac{1}{\nu\eps}))}$.

\begin{algorithm}[H]
	\caption{Estimating $\log_2(p(x)+q(x))$}\label{alg:Magnitude}
	\begin{algorithmic}[1]
		\STATEx \textbf{input} $x\in[n]$, $\theta\in (0,1)$
		\STATE \textbf{for} $k\in K:=\left\{-1,0,1,2,\ldots, \left\lceil\log_2\left(\frac{1}{\theta}\right)\right\rceil\right\}$ \textbf{do}
		\STATE \qquad Run algorithm $\A_k$ on $\ket{x}$ \textbf{if} output is ``greater'' \textbf{then return} $k$
		\STATE \textbf{return} ``less than $\theta$''		
	\end{algorithmic}
\end{algorithm}
\noindent
For any $x$ with $p(x)+q(x)\geq \theta$, \alg{Magnitude} outputs a $k$ such that $p(x)+q(x)\in (2^{-k-1},2^{-k+1})$. 
However, note that this labeling is probabilistic; let us denote by $s_k(x)$ the probability that $x$ is labeled by $k$. Observe that $s_k(x)=0$ unless $k\in \left\{\left\lfloor\log_2\left(\frac{1}{p(x)+q(x)}\right)\right\rfloor,\left\lceil\log_2\left(\frac{1}{p(x)+q(x)}\right)\right\rceil\right\}$. Now let us express $\nrm{p-q}^2_2$ in terms of this ``soft-selection'' function $s(x)$.
\begin{align}
\nrm{p-q}^2_2&=\sum_{x}\left|p(x)-q(x)\right|^2\nonumber\\
&=\sum_{x}\sum_{k\in K}s_k(x)\left|p(x)-q(x)\right|^2 +\eta \tag*{$\eta\in [0,2\theta)$}\nonumber\\
&=\sum_{k\in K}2^{9-k}\sum_{x}s_k(x)\frac{p(x)+q(x)}{2}\frac{2^{-k-2}}{p(x)+q(x)}\left(\frac{p(x)-q(x)}{2^{-k+3}}\right)^{\!\!2} +\eta,
\end{align}
where the bound on $\eta$ follows from the observation that
$$\eta
\leq \!\! \sum_{x\colon p(x)+q(x)< \theta}\!\! \left|p(x)-q(x)\right|^2
\leq \!\! \sum_{x\colon p(x)+q(x)< \theta}\!\! \left(p(x)+q(x)\right)^2
< \theta \sum_{x\colon p(x)+q(x)< \theta}p(x)+q(x)< 2\theta.$$

If for all $k\in K$ we have a $2^{k-9}\frac{\theta}{|K|}$-precise estimate of 
\begin{equation}
	\sum_{x}s_k(x)\frac{p(x)+q(x)}{2}\frac{2^{-k-2}}{p(x)+q(x)}\left(\frac{p(x)-q(x)}{2^{-k+3}}\right)^{\!\!2},\label{eq:l2prob}
\end{equation}
then we get a $3\theta$-precise estimate of $\nrm{p-q}^2_2$. In particular setting $\theta:= \nu\eps^2/6$, this solves the robust testing problem, since if $\nrm{p-q}\geq \eps$ then $\nrm{p-q}^2\geq \eps^2$, on the other hand if $\nrm{p-q}\leq (1-\nu)\eps$ then $\nrm{p-q}^2\leq (1-\nu)^2\eps^2\leq (1-\nu)\eps^2= \eps^2 -\nu\eps^2$.

Now we describe how to construct a quantum algorithm that sets the first output qubit to $\ket{0}$ with probability \eqref{eq:l2prob}. Start with preparing a purification of the distribution of $\frac{p(x)+q(x)}{2}$, then set the label of $x$ to $k$ with probability $s_k(x)$ using \alg{Magnitude} terminating it after using $\A_k$. Then separately apply the maps $\sqrt{\frac{2^{-k-2}}{p(x)+q(x)}}$ and $\frac{p(x)-q(x)}{2^{-k-3}}$ to the state.

Note that we do not need to apply the above transformations exactly, it is enough if apply them with precision say $2^{k-11}\frac{\theta}{|K|}$. 
We analyze the complexity of (approximately) implementing the above sketched algorithm.
To implement the map $\sqrt{\frac{2^{-k-2}}{p(x)+q(x)}}$, we use the unitary of Eq.~\eqref{eq:probDistSingencoding}, and transform the singular values by the polynomial $\tilde{P}$ from \cor{polyApx} using \thm{matchingParity}. In order to implement the map $\frac{p(x)-q(x)}{2^{-k-2}}$, we again use the unitary of Eq.~\eqref{eq:probDistSingencoding}, but now separately for $p$ and $q$. We amplify both the singular values $\sqrt{p(x)}$ and $\sqrt{q(x)}$ by a factor $\sqrt{2^{k-2}}$ using the polynomial $\tilde{Q}$ from \cor{polyApx} in \thm{matchingParity}. Then we create a bolck-encoding\footnote{If we have a projected unitary encoding of $\Pi U \widetilde{\Pi}=A = \sum_{i}\varsigma_i\ketbra{\psi_i}{0,i}$ with $\widetilde{\Pi}=\ketbra{0}{0}\otimes I$ we can immediately turn it into a block-encoding of $A^\dagger A=\sum_{i}\varsigma^2_i\ketbra{i}{i}$, by e.g. applying \thm{matchingParity} with the polynomial $x^2$. } of both and $2^{k-2}p(x)$ and $2^{k-2}q(x)$ and then combine them to get a block-encoding of $\frac{p(x)-q(x)}{2^{-k-3}}$. In both cases the query complexity of $\bigO{\theta/|K|}$-precisely implementing the transformations is $\bigO{2^{k/2}\log\left(|K|/\theta\right)}=\bigO{2^{k/2}\log\left(1/\theta\right)}$.
Since computing the label $k$ also costs $\bigO{2^{k/2}\log\left(1/(\nu\eps)\right)}$, this is the overall complexity so far.
Finally we estimate the probability of the first qubit being set to $\ket{0}$ with setting $M=\bigO{|K|2^{-k/2}/(\nu\eps)}$ in \thm{AmpEst}, and boost the success probability to $1-\bigO{1/|K|}$ with $\bigO{\log(|K|)}$ repetitions. Thus for any $k\in K$ the overall complexity of estimating Eq.~\eqref{eq:l2prob} with sufficient precision has (query) complexity $\bigO{\frac{|K|}{\nu\eps}\log\left(\frac{1}{\nu\eps}\right)\log(|K|)}=\bigO{\frac{1}{\nu\eps}\log^2\left(\frac{1}{\nu\eps}\right)\log\log\left(\frac{1}{\nu\eps}\right)}$. Therefore estimating $\nrm{p-q}^2_2$ to precision $\nu\eps^2/6$ with high probability has (query) complexity
\begin{equation*}
\bigO{\frac{1}{\nu\eps}\log^3\left(\frac{1}{\nu\eps}\right)\log\log\left(\frac{1}{\nu\eps}\right)}. \qedhere
\end{equation*}
\cvskip{-5mm}
\end{proof}

It is easy to see an $\Omega\left(\frac{1}{\eps}\right)$ lower bound on the above problem even in the strongest quantum pure state input model~\defn{pureSTateQuery}. Indeed, consider the case $n=2, q=(\frac{1}{2},\frac{1}{2})$ (the uniform distribution on $\{1,2\}$) and we want to test whether $p=q$ or $\|p-q\|_{2}\geq\eps$. This is equivalent to test whether $p_{1}=\frac{1}{2}$ or $|p_{1}-\frac{1}{2}|\geq\frac{\eps}{\sqrt{2}}$; due to the optimality of amplitude estimation in \thm{AmpEst}, this task requires $\Omega(\frac{1}{\eps})$ quantum queries to the unitary $U$ preparing the state $\sqrt{p_1}\ket{1}+\sqrt{p_2}\ket{2}$.

Now we prove the following result on (robust) $\ell^2$-closeness testing for quantum distributions:
\begin{theorem}\label{thm:quantum-l2-purified}
	Given $\eps,\nu\in(0,1)$ and two density operators $\rho,\sigma\in\C^{n\times n}$ with purified quantum query-access to $U_{\rho}$ and $U_{\sigma}$ as in \defn{purified-quantum-query}, it takes $\bigO{\kern-0.3mm\min\left(\frac{\sqrt{n}}{\epsilon},\frac{1}{\epsilon^2}\right)\frac{1}{\nu}}$ queries to $U_{\rho},U_{\rho}^{\dagger},U_{\sigma},U_{\sigma}^{\dagger}$ to decide whether  $\nrm{\rho\!-\!\sigma}_{2}\geq\epsilon$ or $\nrm{\rho\!-\!\sigma}_{2}\leq(1-\nu)\epsilon$, with success probability at least $2/3$.
\end{theorem}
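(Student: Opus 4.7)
The plan is to establish two separate query upper bounds---$\bigO{\sqrt{n}/(\nu\epsilon)}$ and $\bigO{1/(\nu\epsilon^{2})}$---and simply take the better of the two. The first applies the block-encoding recipe outlined in \subsec{techniques} to a purification of the maximally mixed state $I/n$; the second is a dimension-independent bound obtained by SWAP-test-style estimation of $\Tr[\rho^{2}]$, $\Tr[\sigma^{2}]$, and $\Tr[\rho\sigma]$ via amplitude estimation. Neither half relies on singular value transformation beyond what was already used in the intro's $\ell^{3}$ example, so the proof is substantially simpler than that of \thm{classical-l2-purified}.

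For the $\bigO{\sqrt{n}/(\nu\epsilon)}$ bound, I first use the Low--Chuang construction recalled in \subsec{techniques} to build block-encodings of $\rho$ and $\sigma$ from $U_{\rho}$ and $U_{\sigma}$ at constant query cost, and then form their half-difference to obtain a block-encoding $V$ of $(\rho-\sigma)/2$ with subnormalisation $1$, still at $\bigO{1}$ cost. Next, I prepare the maximally entangled state $\ket{\Phi}_{CD}=\frac{1}{\sqrt{n}}\sum_{i=1}^{n}\ket{i}_{C}\ket{i}_{D}$---a purification of $I/n$ on register $C$---and apply $V$ on register $C$ with its block-encoding ancillas initialised to $\ket{0}$. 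Because $V$ has subnormalisation $1$, the probability of observing the all-zeros success flag on the ancillas is
\[
\left\lVert\left(\tfrac{\rho-\sigma}{2}\otimes I_{D}\right)\ket{\Phi}_{CD}\right\rVert^{2}=\frac{\Tr[(\rho-\sigma)^{2}]}{4n}=\frac{\|\rho-\sigma\|_{2}^{2}}{4n}.
\]
Distinguishing $\|\rho-\sigma\|_{2}^{2}\geq \epsilon^{2}$ from $\|\rho-\sigma\|_{2}^{2}\leq (1-\nu)^{2}\epsilon^{2}\leq (1-\nu)\epsilon^{2}$ amounts to estimating this probability within additive error $\Theta(\nu\epsilon^{2}/n)$, which by \thm{AmpEst} takes $\bigO{\sqrt{n}/(\nu\epsilon)}$ calls to $V$; a standard median-of-$\bigO{1}$-repetitions trick boosts the success probability to $2/3$.

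For the $\bigO{1/(\nu\epsilon^{2})}$ bound, I write $\|\rho-\sigma\|_{2}^{2}=\Tr[\rho^{2}]+\Tr[\sigma^{2}]-2\Tr[\rho\sigma]$ and estimate each summand using the SWAP test combined with amplitude estimation. Preparing $\ket{\psi_{\rho}}\otimes\ket{\psi_{\sigma}}$ via $U_{\rho}\otimes U_{\sigma}$ and running the SWAP test on the two $B$-registers yields a unitary whose first-qubit ``zero'' outcome has probability $(1+\Tr[\rho\sigma])/2$; by \thm{AmpEst}, amplitude estimation approximates this probability within additive $\eta$ in $\bigO{1/\eta}$ queries. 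The same procedure with two copies of $U_{\rho}$ (resp.\ $U_{\sigma}$) estimates $\Tr[\rho^{2}]$ (resp.\ $\Tr[\sigma^{2}]$) at identical cost. Setting $\eta=\Theta(\nu\epsilon^{2})$ yields an estimate of $\|\rho-\sigma\|_{2}^{2}$ within additive $\Theta(\nu\epsilon^{2})$, enough to decide the robust test, with total complexity $\bigO{1/(\nu\epsilon^{2})}$.

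The only real subtlety is the normalisation bookkeeping in the first bound: one must check that the linear combination of block-encodings of $\rho$ and $\sigma$ genuinely yields a subnormalisation-$1$ block-encoding of $(\rho-\sigma)/2$ (so the success probability is exactly $\|\rho-\sigma\|_{2}^{2}/(4n)$ rather than an $\alpha$-rescaled version), and that the bound $|\tilde{p}-p|\leq 2\pi\sqrt{p(1-p)}/M+\pi^{2}/M^{2}$ of \thm{AmpEst} does deliver additive precision $\Theta(\nu\epsilon^{2}/n)$ for $M=\bigO{\sqrt{n}/(\nu\epsilon)}$, which is straightforward since $p=\Theta(\epsilon^{2}/n)$. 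Taking the minimum of the two bounds gives the claimed complexity.
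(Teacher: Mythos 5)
Your proposal matches the paper's proof essentially step for step: the $\bigO{\sqrt{n}/(\nu\epsilon)}$ bound comes from applying a block-encoding of $(\rho-\sigma)/2$ to the maximally entangled state and amplitude-estimating the success probability $\Tr[(\rho-\sigma)^{2}]/(4n)$, and the $\bigO{1/(\nu\epsilon^{2})}$ bound comes from SWAP-test estimation of $\Tr[\rho^{2}]$, $\Tr[\sigma^{2}]$, $\Tr[\rho\sigma]$, after which one takes the minimum. Your normalisation and precision bookkeeping is correct, so this is the same argument as the paper's.
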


\begin{proof}
	We can combine the block-encodings of $\rho$ and $\sigma$ to apply the map $\frac{\rho-\sigma}{2}$ to the maximally entangled state $\sum_{j=1}^{n}\frac{\ket{j}\ket{j}}{\sqrt{n}}$, which gives
	\cvskip{-3mm}
	\begin{align*}
	\sum_{j=1}^n \frac{\ket{j}\ket{j}}{\sqrt{n}}\rightarrow \left(\frac{\rho-\sigma}{2}\otimes I\right)\sum_{j=1}^n \frac{\ket{j}\ket{j}}{\sqrt{n}}\ket{0}+\ldots \ket{1}.
	\end{align*}
	\cvskip{-1.5mm}\noindent
	The probability of measuring the $|0\>$ ancilla state is
	\cvskip{-3mm}
	\begin{align*}
	\sum_{i,j=1}^{n}\frac{\bra{i}\bra{i}}{\sqrt{n}}\left(\frac{(\rho-\sigma)^{2}}{4}\otimes I\right)\frac{\ket{j}\ket{j}}{\sqrt{n}}=\frac{1}{4n}\sum_{i=1}^{n}\bra{i}(\rho-\sigma)^{2}\ket{i}=\frac{1}{4n}\Tr[(\rho-\sigma)^{2}].
	\end{align*}
	\cvskip{-1.5mm}\noindent
	Thus it suffices to apply amplitude estimation with $M=\Theta\left(\frac{\sqrt{n}}{\nu\epsilon}\right)$ calls to $U_{\rho},U_{\rho}^{\dagger},U_{\sigma},U_{\sigma}^{\dagger}$.
	
	On the other hand, we can estimate $\nrm{\rho-\sigma}^2_2$ by observing that $\nrm{\rho-\sigma}^2_2=\tr{(\rho-\sigma)^2}=\tr{\rho^2}-2\tr{\rho\sigma}+\tr{\sigma^2}$. 
	Since the success probability of the SWAP test (\cite{buhrman2001QuantumFingerprinting}) on input states $\rho,\sigma$ is $\frac{1}{2}\left(1+\tr{\rho\sigma}\right)$, we can individually  estimate the latter quantities with precision $\bigO{\nu\eps^2}$ using amplitude estimation (\thm{AmpEst}) with $\bigO{\frac{1}{\nu\eps^2}}$ queries to $U_{\rho},U_{\rho}^{\dagger},U_{\sigma},U_{\sigma}^{\dagger}$. As a result, we could decide whether $\nrm{\rho\!-\!\sigma}_{2}\geq\epsilon$ or $\nrm{\rho\!-\!\sigma}_{2}\leq(1-\nu)\epsilon$ using $\bigO{\frac{1}{\nu\eps^2}}$ queries.
	
	The result of \thm{quantum-l2-purified} hence follows by taking the minimum of the two complexities.
\end{proof}
\cvspace{-2mm}
\section{Future work and open questions}
Our paper raises a couple of natural open questions for future work. For example:
\begin{itemize}[leftmargin=*]
	\item Can we prove quantum lower bounds that match our upper bounds? For instance, can we prove an $\Omega\left(\frac{n}{\eps}\right)$ lower bound on estimating the von Neumann entropy in the purified quantum query-access model for density operators? Is~there a lower bound technique which naturally fits our purified quantum query input model?
	
	\item For which other distributional property testing problems can we get speed-ups using the presented methodology? 

\end{itemize}

\ignore{
\section*{Acknowledgments}
A.G. thanks Ronald de Wolf, Ignacio Cirac and Yimin Ge for useful discussion.
}


\bibliography{Bibliography}

\newpage

\appendix


\section{Projected unitary encodings used for singular value transformation}\label{apx:ProjectedEncodings}
First we handle the case of classical distributions. 
Let $U_p$ be a purified quantum oracle of a classical distribution $p$ as in~\defn{purified-quantum-query}, and let $U:=\left(U_p\otimes I\right)$, also let $\Pi:=\left(\sum_{i=1}^{n}I\otimes \ketbra{i}{i}\otimes\ketbra{i}{i}\right)$, $\widetilde{\Pi}:=\left(\ketbra{0}{0}\otimes\ketbra{0}{0}\otimes I\right)$, then
\begin{align}
\Pi U\widetilde{\Pi}=\Pi\left(U_p\otimes I\right)\widetilde{\Pi}=&\Big(\sum_{i=1}^{n}I\otimes\ketbra{i}{i}\otimes \ketbra{i}{i}\Big)(U_{p}\otimes I)\big(\ketbra{0}{0}\otimes\ketbra{0}{0}\otimes I\big) \nonumber \\
=&\sum_{i=1}^{n}\Big((I\otimes \ketbra{i}{i})U_{p}(\ketbra{0}{0}\otimes\ketbra{0}{0})\Big)\otimes\ketbra{i}{i} \nonumber \\
=&\sum_{i=1}^{n}\Big((I\otimes \ketbra{i}{i})\sum_{j=1}^{n}\sqrt{p_{j}}\ket{\phi_{j}}\ket{j}\bra{0}\bra{0}\Big)\otimes\ketbra{i}{i} \nonumber \\
=&\sum_{i=1}^{n}\sqrt{p_i}\ketbra{\phi_i}{0}\otimes\ketbra{i}{0}\otimes\ketbra{i}{i}.  \nonumber
\end{align}

Now we turn to quantum distributions where we do not know the diagonalizing basis of the density operator $\rho$.
Let $U_\rho$ be a purified quantum oracle of a quantum distribution $\rho$ as in~\defn{purified-quantum-query}, and $W$ a unitary, mapping $\ket{0}\ket{0}\mapsto\sum_{j=1}^{n}\frac{\ket{j}\ket{j}}{\sqrt{n}}$. Let $U':=\!\left(I\otimes U^\dagger_\rho\right)\!\left(W^\dagger \otimes I\right)$ , $\Pi':=\left(I\otimes\ketbra{0}{0}\otimes\ketbra{0}{0}\right)$ and $\widetilde{\Pi}$ as above, then
\begin{align*}
\Pi' U'\widetilde{\Pi}=\Pi' \left(I\otimes U^\dagger_\rho\right)\!\left(W^\dagger \otimes I\right)\widetilde{\Pi}=&\left(I\otimes (\ketbra{0}{0}\otimes\ketbra{0}{0}U^\dagger_\rho)\right) \left(\left(\sum_{j=1}^{n}\frac{\ket{j}\ket{j}}{\sqrt{n}}\right)\bra{0}\bra{0}\otimes I\right)\\
=&\left(I\otimes \sum_{i=1}^{n}\sqrt{p_i}\ket{0}\ketbra{0}{\phi_i}\bra{\psi_i}\right) \!\left(\!\left(\sum_{j=1}^{n}\frac{\ket{\phi'_j}\ket{\phi_j}}{\sqrt{n}}\right)\!\bra{0}\bra{0}\otimes I\right)\\
=&\sum_{i=1}^{n}\sqrt{\frac{p_i}{n}}\ket{\phi'_i}\ket{0}\ket{0}\bra{0}\bra{0}\bra{\psi_i},
\end{align*}
where $\sum_{j=1}^{n}\frac{\ket{\phi'_j}\ket{\phi_j}}{\sqrt{n}}=\sum_{j=1}^{n}\frac{\ket{j}\ket{j}}{\sqrt{n}}$ is the Schmidt decomposition of the maximally entangled state under the basis $(|\phi_{1}\>,\ldots,|\phi_{n}\>)$.


\section{Polynomial approximations for singular value transformation}\label{apx:SVT-appendix}
We use the following result based on local Taylor series:
\begin{lemma}{\!\bf\cite[Corollary 66]{gilyen2018QSingValTransf}}\label{cor:boundedTaylorApx}
	Let $x_0\in[-1,1]$, $r\in(0,2]$, $\nu\in(0,r]$ and let  $f\colon [-x_0-r-\nu,x_0+r+\nu]\rightarrow \C$ and be such that $f(x_0+x)=\sum_{\ell=0}^{\infty} a_\ell x^\ell$ for all $x\in\![-r-\nu,r+\nu]$. Suppose $B>0$ is such that $\sum_{\ell=0}^{\infty}(r+\nu)^\ell|a_\ell|\leq B$. Let $\eps\in\!\left(0,\frac{1}{2B}\right]$, then there is an efficiently computable polynomial $P\in \C[x]$ of degree $\bigO{\frac{1}{\nu}\log\left(\frac{B}{\eps}\right)}$ such that\footnote{For a function $g\colon\R\rightarrow\C$, and an interval $[a,b]\subseteq\R$, we define $\|g\|_{[a,b]}:=\max_{x\in[a,b]} |g(x)|$.}
	\begin{align*}
	\nrm{ f(x)- P(x) }_{[x_0-r,x_0+r]}&\leq \eps\\
	\nrm{P(x)}_{[-1,1]}&\leq  \eps + \nrm{f(x)}_{[x_0-r-\nu/2,x_0+r+\nu/2]}\leq \eps + B\\
	\nrm{P(x)}_{[-1,1]\setminus [x_0-r-\nu/2,x_0+r+\nu/2]}&\leq \eps.
	\end{align*}
\end{lemma}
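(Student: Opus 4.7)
The plan is to construct $P$ as the product of two ingredients: a truncated Taylor series, which accurately represents $f$ near $x_0$, and a smooth polynomial cutoff, which enforces uniform boundedness on all of $[-1,1]$. Both factors will be built to have degree $\bigO{\nu^{-1}\log(B/\eps)}$, so their product matches the claimed degree bound.

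First I would truncate the Taylor expansion: define $f_N(x):=\sum_{\ell=0}^N a_\ell(x-x_0)^\ell$. The hypothesis $\sum_\ell(r+\nu)^\ell|a_\ell|\leq B$ gives, for $x\in[x_0-r-\nu/2,\,x_0+r+\nu/2]$,
\[
|f(x)-f_N(x)|\leq \sum_{\ell>N}(r+\nu/2)^\ell|a_\ell|\leq B\Big(\frac{r+\nu/2}{r+\nu}\Big)^{\!N}\leq B\,e^{-N\nu/(2(r+\nu))}.
\]
Since $r+\nu\leq 4$, choosing $N=\Theta\!\left(\tfrac{r+\nu}{\nu}\log(B/\eps)\right)=\bigO{\nu^{-1}\log(B/\eps)}$ forces this remainder below $\eps/3$ on the slightly enlarged interval. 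On $[x_0-r,x_0+r]$ this already gives a good approximation of $f$, but $f_N$ may blow up dramatically outside the disc of convergence, so it cannot serve as $P$ directly.

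Next I would construct a polynomial rectangular cutoff $R\in\R[x]$ of degree $\bigO{\nu^{-1}\log(B/\eps)}$ with $R(x)\in[0,1]$ on $[-1,1]$, $R(x)\geq 1-\eta$ on $[x_0-r,x_0+r]$, and $R(x)\leq \eta$ on $[-1,1]\setminus[x_0-r-\nu/2,\,x_0+r+\nu/2]$, for a tiny parameter $\eta$ to be chosen. Such an $R$ can be assembled as the product of two shifted polynomial approximations of the error function, each localizing one endpoint with transition width $\Theta(\nu)$; standard constructions of erf-approximations give the required degree $\bigO{\nu^{-1}\log(1/\eta)}$. Finally I would set $P(x):=R(x)f_N(x)$. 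On $[x_0-r,x_0+r]$ the triangle inequality gives $|P-f|\leq (1-R)|f_N|+|f-f_N|\leq \eta(B+\eps/3)+\eps/3\leq \eps$ provided $\eta\leq\eps/(6B)$. The bound $\|P\|_{[-1,1]}\leq \eps+B$ follows by bounding $|f_N|\leq B+\eps/3$ inside the enlarged interval and using the smallness of $R$ elsewhere, and the third bound is immediate from $|R|\leq \eta$ outside the enlarged interval together with the chosen $\eta$.

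The main obstacle is coordinating $N$, the buffer width $\nu/2$, and the cutoff precision $\eta$ so that the inevitable geometric blow-up $|f_N(x)|\leq \sum_{\ell\leq N}(2/(r+\nu))^\ell B$ outside the Taylor convergence interval (which comes from $|a_\ell|\leq B/(r+\nu)^\ell$) is safely suppressed by the exponential decay of $R$, without inflating either factor's degree beyond $\bigO{\nu^{-1}\log(B/\eps)}$. A short calculation shows that taking $\eta$ polynomially small in $\eps/B$ and of order $(r+\nu)^{N}$-many extra digits suffices, and this only costs an additional $\bigO{\nu^{-1}\log(B/\eps)}$ in the degree of $R$, preserving the stated bound.
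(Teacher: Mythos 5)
This lemma is not proved in the paper at all: it is imported verbatim from \cite[Corollary~66]{gilyen2018QSingValTransf}, whose argument indeed has the same overall shape as yours (a truncated local Taylor series multiplied by an erf-based approximation of the rectangle function). So your strategy is the right one; the problem is that your parameter accounting fails at exactly the step you yourself flag as ``the main obstacle'', and the claimed degree bound does not come out of your construction. Concretely: to make the truncation error at most $\eps/3$ on $[x_0-r-\nu/2,x_0+r+\nu/2]$ you need $N=\Theta\big(\tfrac{r+\nu}{\nu}\log(B/\eps)\big)$. Outside the convergence window the truncated series can genuinely attain $\max_{[-1,1]}|f_N|\approx B\,(2/(r+\nu))^{N}$ (take $f$ proportional to the single monomial $\big(\tfrac{x-x_0}{r+\nu}\big)^{N}$, or a slowly decaying coefficient sequence), so the cutoff must satisfy $\log(1/\eta)=\Theta\big(N\log\tfrac{2}{r+\nu}+\log\tfrac{B}{\eps}\big)$. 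With transition width $\Theta(\nu)$ the rectangle polynomial then has degree $\Theta\big(\tfrac{1}{\nu}\log\tfrac1\eta\big)=\Theta\big(\tfrac{(r+\nu)\log\frac{2}{r+\nu}}{\nu^{2}}\log\tfrac{B}{\eps}\big)$, which exceeds the target $\bigO{\tfrac{1}{\nu}\log\tfrac{B}{\eps}}$ by a factor that can be as large as $\Theta(1/\nu)$ (e.g.\ $r$ constant, $\nu\to0$). Your closing claim that suppressing the geometric blow-up ``only costs an additional $\bigO{\nu^{-1}\log(B/\eps)}$ in the degree of $R$'' would require $N\log\tfrac{2}{r+\nu}=\bigO{\log(B/\eps)}$, which is false whenever $r\gg\nu$; the bound $(r+\nu)\log\tfrac{2}{r+\nu}=\bigO{1}$ only saves you one of the two factors of $1/\nu$, not both.

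This is not a cosmetic loss for the present paper either: the application in \lem{polyApx} (the construction of $\tilde S$) uses $x_0=1$, $r=1-\beta$, $\nu=\beta/2$, i.e.\ precisely the regime $r\gg\nu$ where your accounting breaks; a $\beta^{-2}$ degree would degrade \thm{classical-entropy-purified} from $\sqrt n$-type to $n$-type scaling. The remaining parts of your argument (the triangle-inequality bookkeeping on $[x_0-r,x_0+r]$, the sup-norm bound $\eps+B$, and the smallness outside the enlarged interval, given a cutoff of the required quality) are fine. To actually obtain the stated $\bigO{\tfrac1\nu\log\tfrac{B}{\eps}}$ degree you cannot simply pay for the worst-case growth of $f_N$ on all of $[-1,1]$ through the error parameter of a width-$\nu$ rectangle; the coordination of the truncation and the cutoff has to be done more carefully, and that is precisely the content of the proof in \cite{gilyen2018QSingValTransf} that your proposal replaces with ``a short calculation shows''.
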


We can use the above result to construct the following useful polynomial approximations.

\polyApx*
\begin{proof}
For the construction of the $\tilde{P}$ and $\tilde{Q}$ polynomials see Corollary~67 and Theorem~30 of \cite{gilyen2018QSingValTransf}, respectively. It remains to construct the polynomial $\tilde{S}$ above.

Denote $f(x)=\frac{\ln(1/x)}{2\ln(2/\beta)}$; by taking $\epsilon=\eta/2$, $x_{0}=1$, $r=1-\beta$, $\nu=\frac{\beta}{2}$, and $B=\frac{1}{2}$ in \cor{boundedTaylorApx}, we have a polynomial $S\in\C[x]$ of degree $\bigO{\frac{1}{\nu}\log(\frac{B}{\epsilon})}=\bigO{\frac{1}{\beta}\log(\frac{1}{\eta})}$ such that
\begin{align}
			\nrm{ f(x)- S(x)}_{[\beta,2-\beta]}&\leq\eta/2 \label{eq:truncApx-entropy} \\
			\nrm{S(x)}_{[-1,1]}&\leq B+\eta/2\leq(1+\eta)/2 \label{eq:truncBndAll-entropy} \\
			\nrm{S(x)}_{[-1,\frac{\beta}{2}]}&\leq\eta/2. \label{eq:truncBnd-entropy}
\end{align}
Note that $B=\frac{1}{2}$ is valid because the local Taylor series of $f(x)$ at $x=1$ is $\frac{1}{2\ln(2/\beta)}\sum_{l=1}^{\infty}\frac{(-1)^{l}x^{l}}{l}$, and as a result we could take
\begin{align*}
B=\frac{1}{2\ln(2/\beta)}\sum_{l=1}^{\infty}\frac{(1-\beta/2)^{l}}{l}&=-\frac{1}{2\ln(2/\beta)}\sum_{l=1}^{\infty}\frac{(-1)^{l-1}}{l}(-1+\beta/2)^{l} \nonumber \\
&=-\frac{1}{2\ln(2/\beta)}\ln\frac{\beta}{2}=\frac{1}{2}.
\end{align*}
However, $S$ is not an even polynomial in general; we instead take $\tilde{S}(x)=S(x)+S(-x)$ for all $x\in [-1,1]$. Then by \eq{truncApx-entropy} and \eq{truncBnd-entropy} we have
\begin{align}\label{eq:truncApx-symmetry-error}
\nrm{ f(x)-\tilde{S}(x)}_{[\beta,1]}\leq \nrm{f(x)-\tilde{S}(x)}_{[\beta,1]}+\nrm{\tilde{S}(-x)}_{[\beta,1]}\leq\frac{\eta}{2}+\frac{\eta}{2}=\eta.
\end{align}
Furthermore, $\tilde{S}$ is an even polynomial such that $\deg(\tilde{S})=\bigO{\frac{1}{\beta}\log(\frac{1}{\eta})}$; hence \eq{truncBndAll-entropy} and \eq{truncBnd-entropy} imply
\begin{align*}
\nrm{\tilde{S}(x)}_{[-1,1]}=\nrm{\tilde{S}(x)}_{[0,1]}\leq\nrm{S(x)}_{[0,1]}+\nrm{S(x)}_{[-1,0]}\leq\frac{1+\eta}{2}+\frac{\eta}{2}\leq 1
\end{align*}
given $\eta\leq 1/2$. (Finally we can take the real part of $\tilde{S}(x)$ if it has some complex coefficients.)
\end{proof}


\section{Corollaries of our \texorpdfstring{$\ell^2$}{l2}-closeness testing results}\label{apx:proof-appendix}
\subsection{\texorpdfstring{$\ell^1$}{l1}-closeness testing with purified query-access}\label{sec:closeness-purified-quantum-query}
\begin{corollary}\label{cor:l1-purified}
	Given $\epsilon>0$ and two distributions $p,q$ on the domain $[n]$ with purified quantum query-access via $U_{p}$ and $U_{q}$ as in \defn{purified-quantum-query}, it takes $\bOt{\frac{\sqrt{n}}{\epsilon}}$ queries to $U_{p},U_{p}^{\dagger},U_{q},U_{q}^{\dagger}$ to decide whether $p\!=\!q$ or $\nrm{p\!-\!q}_{1}\geq\epsilon$ with success probability at least $2/3$.
	Similarly for density operators $\rho,\sigma\in\C^{n\times n}$ with purified quantum query-access via $U_{\rho}$ and $U_{\sigma}$, it takes $\bigO{\frac{n}{\epsilon}}$ queries to $U_{\rho},U_{\rho}^{\dagger},U_{\sigma},U_{\sigma}^{\dagger}$ to decide whether $\rho\!=\!\sigma$ or $\nrm{\rho\!-\!\sigma}_{1}\geq\epsilon$ with success probability at least $2/3$.
\end{corollary}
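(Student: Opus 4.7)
The plan is to reduce both $\ell^1$-closeness testing statements to the corresponding robust $\ell^2$-closeness testers already established in \thm{classical-l2-purified} and \thm{quantum-l2-purified}, via the standard norm inequality relating $\ell^1$ and $\ell^2$ norms.

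First I would observe the Cauchy--Schwarz bound: for any vector $v\in\R^n$ one has $\nrm{v}_1\leq\sqrt{n}\nrm{v}_2$, so if $p,q$ are classical distributions on $[n]$ with $\nrm{p-q}_1\geq\epsilon$, then $\nrm{p-q}_2\geq\epsilon/\sqrt{n}$. The analogous Schatten-norm inequality $\nrm{M}_1\leq\sqrt{n}\nrm{M}_2$ holds for $n\times n$ Hermitian matrices (Cauchy--Schwarz applied to the singular values), so $\nrm{\rho-\sigma}_1\geq\epsilon$ implies $\nrm{\rho-\sigma}_2\geq\epsilon/\sqrt{n}$. In either case, the task of distinguishing $p=q$ (respectively $\rho=\sigma$) from $\nrm{\cdot}_1\geq\epsilon$ is a special instance of deciding $\nrm{\cdot}_2\leq(1-\nu)\epsilon'$ versus $\nrm{\cdot}_2\geq\epsilon'$, with the choice $\epsilon':=\epsilon/\sqrt{n}$ and any constant $\nu\in(0,1)$, say $\nu=1/2$, since the ``yes'' case gives $\nrm{\cdot}_2=0\leq(1-\nu)\epsilon'$.

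Plugging these parameters into \thm{classical-l2-purified} gives a classical query complexity of $\bigO{\frac{1}{\nu\epsilon'}\log^3(\frac{1}{\nu\epsilon'})\log\log(\frac{1}{\nu\epsilon'})}=\bOt{\sqrt{n}/\epsilon}$, proving the first part. For the quantum case, \thm{quantum-l2-purified} yields complexity $\bigO{\min(\sqrt{n}/\epsilon',1/(\epsilon')^2)/\nu}=\bigO{\min(n/\epsilon,n/\epsilon^2)}=\bigO{n/\epsilon}$ (the minimum being attained by the first branch whenever $\epsilon\leq 1$).

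There is no real obstacle here; the only things to check are that the Schatten $\ell^1$--$\ell^2$ inequality holds with the same constant $\sqrt{n}$ for $n\times n$ Hermitian matrices (which follows by applying Cauchy--Schwarz to the $n$ singular values of $\rho-\sigma$), and that the robust $\ell^2$-tester of \thm{classical-l2-purified}/\thm{quantum-l2-purified} correctly handles the extreme case where the ``yes'' distance is exactly zero, which it does since $0\leq(1-\nu)\epsilon'$ for any $\nu\in(0,1)$. Thus the reduction is immediate once one notes the Cauchy--Schwarz relationship between the two norms.
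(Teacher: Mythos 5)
Your proposal is correct and takes essentially the same approach as the paper: reduce $\ell^1$-closeness to robust $\ell^2$-closeness via the Cauchy--Schwarz bound $\nrm{p-q}_1\leq\sqrt{n}\nrm{p-q}_2$ (and its Schatten-norm analogue), then invoke \thm{classical-l2-purified} and \thm{quantum-l2-purified} with $\epsilon\leftarrow\epsilon/\sqrt{n}$ and a constant $\nu$. The paper's proof is terser but identical in substance; your added checks (the $n\times n$ Schatten inequality, handling the exact-zero ``yes'' case with constant $\nu$) are sound.
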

\begin{proof}
	By the Cauchy-Schwartz inequality we have $\|p-q\|_{2}\geq\frac{1}{\sqrt{n}}\|p-q\|_{1}$, therefore \thm{classical-l2-purified} implies our claim  by taking $\epsilon\leftarrow\epsilon/\sqrt{n}$ therein. Similarly, \thm{quantum-l2-purified} implies our claim for quantum distributions $\rho$ and $\sigma$.
\end{proof}

\subsection{Independence testing with purified query-access}\label{apx:independence}
\begin{corollary}\label{cor:independence-proof}
Given $\epsilon>0$ and a classical distribution $p$ on $\range{n}\times\range{m}$ with the purified quantum query-access via $U_{p}$ as in \defn{purified-quantum-query}, it takes $\bOt{\frac{\sqrt{nm}}{\epsilon}}$ queries to $U_{p},U_{p}^{\dagger}$ to decide whether $p$ is a product distribution on $\range{n}\times\range{m}$ or $p$ is $\epsilon$-far in $\ell^{1}$-norm from any product distribution on $\range{n}\times\range{m}$ with success probability at least $2/3$.
\end{corollary}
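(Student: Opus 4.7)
The plan is to reduce independence testing to $\ell^1$-closeness testing via \cor{l1-purified}. The key observation is that a distribution $p$ on $\range{n}\times\range{m}$ is a product distribution if and only if $p=p_X\otimes p_Y$, where $p_X$ and $p_Y$ are the marginals of $p$ on $\range{n}$ and $\range{m}$ respectively. In the other direction, if $p$ is $\epsilon$-far in $\ell^1$-norm from every product distribution, then in particular $\nrm{p-p_X\otimes p_Y}_1\geq \epsilon$. So it suffices to distinguish $p = p_X\otimes p_Y$ from $\nrm{p-p_X\otimes p_Y}_1\geq \epsilon$, for which \cor{l1-purified} gives a tester of cost $\bOt{\sqrt{nm}/\epsilon}$, provided we supply purified quantum query-access to both $p$ (which we already have) and to $p_X\otimes p_Y$.

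First, I would observe that $U_p$ itself already provides purified query-access to each marginal, up to register relabeling. Writing the oracle action as $U_p\ket{0}_A\ket{0}_B\ket{0}_C=\sum_{i,j}\sqrt{p_{ij}}\ket{\phi_{ij}}_A\ket{i}_B\ket{j}_C$ where $B$ indexes $\range{n}$ and $C$ indexes $\range{m}$, tracing out registers $A$ and $C$ yields $\sum_{i}(p_X)_i\ketbra{i}{i}$, so $U_p$ is a purified query oracle for $p_X$ with ``data register'' $B$ and ``purification register'' $AC$; symmetrically for $p_Y$.

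Second, I would build a purified query oracle for the product distribution $p_X\otimes p_Y$ by invoking $U_p$ twice on two disjoint sets of registers $(A_1,B_1,C_1)$ and $(A_2,B_2,C_2)$, preparing
\begin{equation*}
\sum_{i,j,i',j'}\sqrt{p_{ij}p_{i'j'}}\ket{\phi_{ij}}_{A_1}\ket{i}_{B_1}\ket{j}_{C_1}\ket{\phi_{i'j'}}_{A_2}\ket{i'}_{B_2}\ket{j'}_{C_2}.
\end{equation*}
Identifying $(B_1,C_2)$ as the data register and tracing out everything else yields $\sum_{i,j'}(p_X)_i(p_Y)_{j'}\ketbra{i,j'}{i,j'}=p_X\otimes p_Y$. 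Hence one query to the resulting purified oracle $U_{p_X\otimes p_Y}$ costs two queries to $U_p$ (and its inverse costs two queries to $U_p^{\dagger}$).

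Finally, I would apply \cor{l1-purified} with input distributions $p$ and $p_X\otimes p_Y$ on the joint domain $\range{n}\times\range{m}$ of cardinality $nm$ and precision $\epsilon$. This produces a tester using $\bOt{\sqrt{nm}/\epsilon}$ queries to $U_p$, $U_p^{\dagger}$, $U_{p_X\otimes p_Y}$, $U_{p_X\otimes p_Y}^{\dagger}$; combined with the $O(1)$ simulation overhead of the last two oracles, the overall complexity is $\bOt{\sqrt{nm}/\epsilon}$ queries to $U_p$ and $U_p^{\dagger}$, with success probability at least $2/3$. The only real thing to check here is that $U_p$ can simultaneously serve as a purified oracle for both marginals at $O(1)$ overhead, which is essentially register relabeling and therefore is not an obstacle; the substantive work has already been done inside \cor{l1-purified}.
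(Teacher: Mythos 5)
Your proposal is correct and follows essentially the same route as the paper: it identifies $p_X\otimes p_Y$ as the canonical product distribution to compare against, observes that two applications of $U_p$ on fresh registers (with appropriate register regrouping and tracing) implement a purified query oracle for $p_X\otimes p_Y$, and then invokes \cor{l1-purified} on the domain $\range{n}\times\range{m}$ of size $nm$. The only difference is notational (you place the purification ancilla first while the paper writes the data registers first), which does not change the argument.
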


\begin{proof}
We define $p_{A}$ to be the margin of $p$ on the first marginal space, i.e., $p_{A}(i)=\sum_{j=1}^{m}p(i,j)$ for all $i\in\range{n}$. We similarly define $p_{B}$ to be the margin of $p$ on the second marginal space, i.e., $p_{B}(j)=\sum_{i=1}^{n}p(i,j)$ for all $j\in\range{m}$. Assume the quantum oracle $U_{p}$ from \defn{purified-quantum-query} acts as
\begin{align*}
U_{p}|0\>_{A}|0\>_{B}|0\>_{C}=\sum_{i=1}^{n}\sum_{j=1}^{m}\sqrt{p(i,j)}|i\>_{A}|j\>_{B}|\psi_{i,j}\>_{C};
\end{align*}
if we denote $|\phi_{i}\>=\sum\limits_{j=1}^{m}\frac{\sqrt{p(i,j)}}{\sqrt{p_{A}(i)}}|j\>|\psi_{i,j}\>$ for all $i\in\range{n}$ and $|\varphi_{j}\>=\sum\limits_{i=1}^{n}\frac{\sqrt{p(i,j)}}{\sqrt{p_{B}(j)}}|i\>|\psi_{i,j}\>$ for all $j\in\range{m}$, then we have
\cvskip{-4mm}
\begin{align*}
U_{p}|0\>_{A}|0\>_{B}|0\>_{C}=\sum_{i=1}^{n}\sqrt{p_{A}(i)}|i\>_{A}|\phi_{i}\>_{B,C}=\sum_{j=1}^{m}\sqrt{p_{B}(j)}|j\>_{B}|\varphi_{j}\>_{A,C}.
\end{align*}
\cvskip{-1mm}\noindent
As a result,
\cvskip{-7mm}
\begin{align*}
(U_{p}\otimes U_{p})(|0\>^{\otimes 6})=\sum_{i=1}^{n}\sum_{j=1}^{m}\sqrt{p_{A}(i)}\sqrt{p_{B}(j)}|i\>|j\>|\phi_{i}\>|\varphi_{j}\>;
\end{align*}
in other words, one purified quantum query to the distribution $p_{A}\times p_{B}$ can be implemented by two queries to $U_{p}$.

If $p$ is a product distribution on $\range{n}\times\range{m}$, then $p=p_{A}\times p_{B}$; if $p$ is $\epsilon$-far in $\ell^{1}$-norm from any product distribution on $\range{n}\times\range{m}$, then $\|p-p_{A}\times p_{B}\|_{1}\geq\epsilon$. Therefore, the problem of independence testing reduces to $\ell^1$-closeness testing for distributions on $\range{n}\times\range{m}$, and hence \cor{independence-proof} follows from \cor{l1-purified}.
\end{proof}

Similarly, \cor{l1-purified} implies that the quantum query complexity of testing independence of quantum distributions is $\bigO{\frac{nm}{\epsilon}}$.

\end{document}